\newtheorem{theorem}{Theorem}[section]
\newtheorem{lemma}[theorem]{Lemma}
\newcommand{\e}{\varepsilon}
\newcommand\Sq{\mathscr{S}}
\newcommand{\arr}[1]{\overrightarrow{#1}}
\begin{document}

\numberofauthors{2}

\title{Undirected Connectivity of Sparse Yao Graphs}

\author{
\alignauthor
Mirela Damian \\
\affaddr{Department of Computer Science, Villanova University, Villanova, PA 19085, USA} \\
\email{mirela.damian@villanova.edu}
\alignauthor
Abhaykumar Kumbhar \\
\affaddr{Department of Computer Science, Villanova University, Villanova, PA 19085, USA} \\
\email{abhaykumar.kumbhar@villanova.edu}
}

\maketitle


\begin{abstract}
Given a finite set $S$ of points in the plane and a real value $d > 0$, the
$d-$radius disk graph $G^d$ contains all edges connecting pairs of
points in $S$ that are within distance $d$ of each other.
For a given graph $G$ with vertex set $S$, the Yao subgraph $Y_k[G]$ with
integer parameter $k > 0$ contains, for each point $p \in S$,  a shortest
edge $pq \in G$ (if any) in each of the $k$ sectors defined by $k$
equally-spaced rays with origin $p$.
Motivated by communication issues in mobile networks with directional
antennas, we study the connectivity properties of $Y_k[G^d]$, for
small values of $k$ and $d$. In particular, we derive lower and upper
bounds on the minimum radius $d$ that renders $Y_k[G^d]$
connected, relative to the unit radius assumed to render $G^d$ connected.
We show that $d = \sqrt{2}$ is necessary and sufficient for the
connectivity of $Y_4[G^d]$. We also show that, for
$d \le 5-\frac{2}{3}\sqrt{35}$, the graph $Y_3[G^d]$ can be disconnected,
but $Y_3[G^{2/\sqrt{3}}]$ is always connected.
Finally, we show that $Y_2[G^d]$ can be disconnected, for any $d \ge 1$.
\end{abstract}

\section{Introduction}
Let $S$ be a finite set of points in the plane and let $G = (S, E)$
be an arbitrary (undirected) graph with node set $S$.
The \emph{directed Yao graph} $\arr{Y_k}[G]$ with integer parameter
$k > 0$ is a subgraph of $G$ defined as follows.
At each point $p \in S$, $k$ equally-spaced rays with origin $p$
define $k$ cones. 
In each cone, pick a shortest edge $pq$ from $G$, if any, and add the
directed edge $\arr{pq}$ to $\arr{Y_k}[G]$. Ties are broken arbitrarily.
The \emph{undirected} Yao graph $Y_k[G]$ ignores the directions of edges,
and includes an edge $pq$ if and only if either $\arr{pq}$ or $\arr{qp}$
is in $\arr{Y_k}[G]$.

For a fixed real value $d > 0$, let $G^d(S)$ denote the $d-$radius disk graph
with node set $S$, in which two nodes $p, q \in S$ are adjacent if and only if
$|pq| \le d$. Most often we will refer to $G^d(S)$ simply as $G^d$, unless
the point set $S$ that defines $G^d$ is unclear from the context.
Under this definition, $G^1$ is the unit disk graph
(UDG), and $G^\infty$ is the complete Euclidean graph, in which any two points
are connected by an edge.
In this paper we study the connectivity of the undirected Yao graph $Y_k[G^d]$,
for small values $k \in \{2, 3, 4\}$ and $d \ge 1$.
Underlying our study is the assumption that $G^1$ is connected.
(For example, $G^1$ can be thought of as the graph connecting
all pairs of points that are within distance no greater than the length of
the bottleneck edge in a minimum spanning tree for $S$, normalized to
one.) In this context, we investigate the following problem:

\medskip
\noindent
\begin{center}
\begin{minipage}{0.9\linewidth}
\emph{Let $S$ be an arbitrary set of points in the plane, and suppose that the
unit radius graph $G^1$ defined on $S$ is connected.
What is the smallest real value $d \ge 1$ for which $Y_k[G^d]$ is connected?}
\end{minipage}
\end{center}

\medskip
\noindent
Throughout the paper, we will refer to the minimum value $d$ that renders
$Y_k[G^d]$ connected as the \emph{connectivity radius} of $Y_k[G^d]$.

\medskip
\noindent
Our research is inspired by the use of wireless directional antennas
in building communication networks. Unlike an omnidirectional antenna, which
transmits energy in all directions, a directional antenna can concentrate its
transmission energy within a narrow cone; the narrower the cone, the longer
the transmission range, for a fixed transmission power level. Directional
antennas are preferable over omnidirectional antennas, because they reduce
interference and extend network lifetime, two criteria of utmost importance
in wireless networks operating on scarce battery resources.

Directed Yao edges can be realized with narrow directional antennas (otherwise called
laser-beam antennas, to imply a small cone angle, close to zero).
One attractive property of Yao graphs is that they can be efficiently constructed locally,
because each node can select its incident edges based on the information from nodes in
its immediate neighborhood only. This enables each node to repair the communication structure
quickly in the face of dynamic and kinetic changes, providing strong support for
node mobility.

The limited number of antennas per node ($1$ to $4$ in practice), raises the fundamental
question of connectivity of Yao graphs $Y_k$, for small values of $k$. If the
communication graph induced by antennas operating in omnidirectional
mode is connected, by how much must an antenna radius
increase to guarantee that $k$ laser-beam antennas at each node, pointing
in the direction of the $Y_k$ edges, preserve connectivity?
In this paper we focus our attention on small $k$ values
($2$, $3$ and $4$) corresponding to the number of antennas commonly used in practice.

\vspace{0.4em}
\subsection{Prior Results}
Yao graphs have been extensively studied in the area of computational geometry,
and have been used in constructing efficient wireless communication
networks~\cite{LWW01,li02sparse,SWLF04,LSW05}. Applying the Yao structure
on top of a dense communication graph, in order to obtain a sparser graph,
is a very natural idea. Most existing results concern Yao graphs
$Y_k[G^\infty]$ with $k \ge 6$. These graphs exhibit nice spanning properties,
in the sense that the length of a shortest path between any two nodes
$p, q \in Y_k[G^\infty]$ is only a constant times the Euclidean distance
$|pq|$ separating $p$ and $q$~\cite{bmnsz-agbsp-03,BDK+10,DR10}.
In the context of using laser-beam antennas to realize $Y_k$ however, these
results could only be applied if $6$ or more antennas were available at each node,
which is a rather impractical requirement.
Few results exist on Yao graphs $Y_k$, for small values of $k$ (below $6$).
It has been shown that $Y_2[G^\infty]$ and $Y_3[G^\infty]$ are not
spanners~\cite{MollaThesis09}, and that $Y_4[G^\infty]$ is a
spanner~\cite{BDK+10}.
However, as far as we know, no results exist on $Y_k[G^d]$, for
any fixed radius $d \ge 1$.

\vspace{0.4em}
\subsection{Our Results}
We develop lower and upper bounds on the connectivity
radius of $Y_2$, $Y_3$ and $Y_4$, relative to the unit radius. (Recall
that our assumption that the unit radius disk graph $G^1$ is connected.)
We prove tight lower and upper bounds equal to $\sqrt{2} \approx 1.414$ on the
connectivity radius $d$ of $Y_4[G^d]$. Surprisingly, we prove a smaller
upper bound equal to $\frac{2}{\sqrt{3}} \approx 1.155$ on the connectivity radius
$d$ of $Y_3[G^d]$.
This is somewhat counterintuitive, as one would expect that fewer outgoing
edges per node ($3$ in the case of $Y_3$, compared to $4$ in the case of $Y_4$)
would necessitate a higher connectivity radius, however our results show that
this is not always the case.
We also derive a lower bound of
$5-\frac{2}{3}\sqrt{35} \approx 1.056$ on the connectivity radius $d$
of $Y_4[G^d]$, leaving a tiny interval $[1.056, 1.155]$ on which the
connectivity of $Y_4$ remains uncertain.
Finally, we show that $Y_2[G^d]$ can be disconnected, for any fixed
value $d \ge 1$.

\vspace{0.4em}
\subsection{Definitions}
Let $S$ be a fixed set of points in the plane.
At each node $p \in S$, let $r_1, r_2, \ldots, r_k$ denote
the $k$ rays originating at $p$, with $r_1$ horizontal along the $+x$ axis
(see Figure~\ref{fig:y3defs}, for $k = 3$).
Let $C_i(p)$ to denote the half-open cone delimited by $r_i$ and $r_{i+1}$,
including $r_i$ but excluding $r_{i+1}$.
(Here we use $r_{k+1}$ to mean $r_1$.)
%
\begin{figure}[htpb]
\centering
\includegraphics[width=0.5\linewidth]{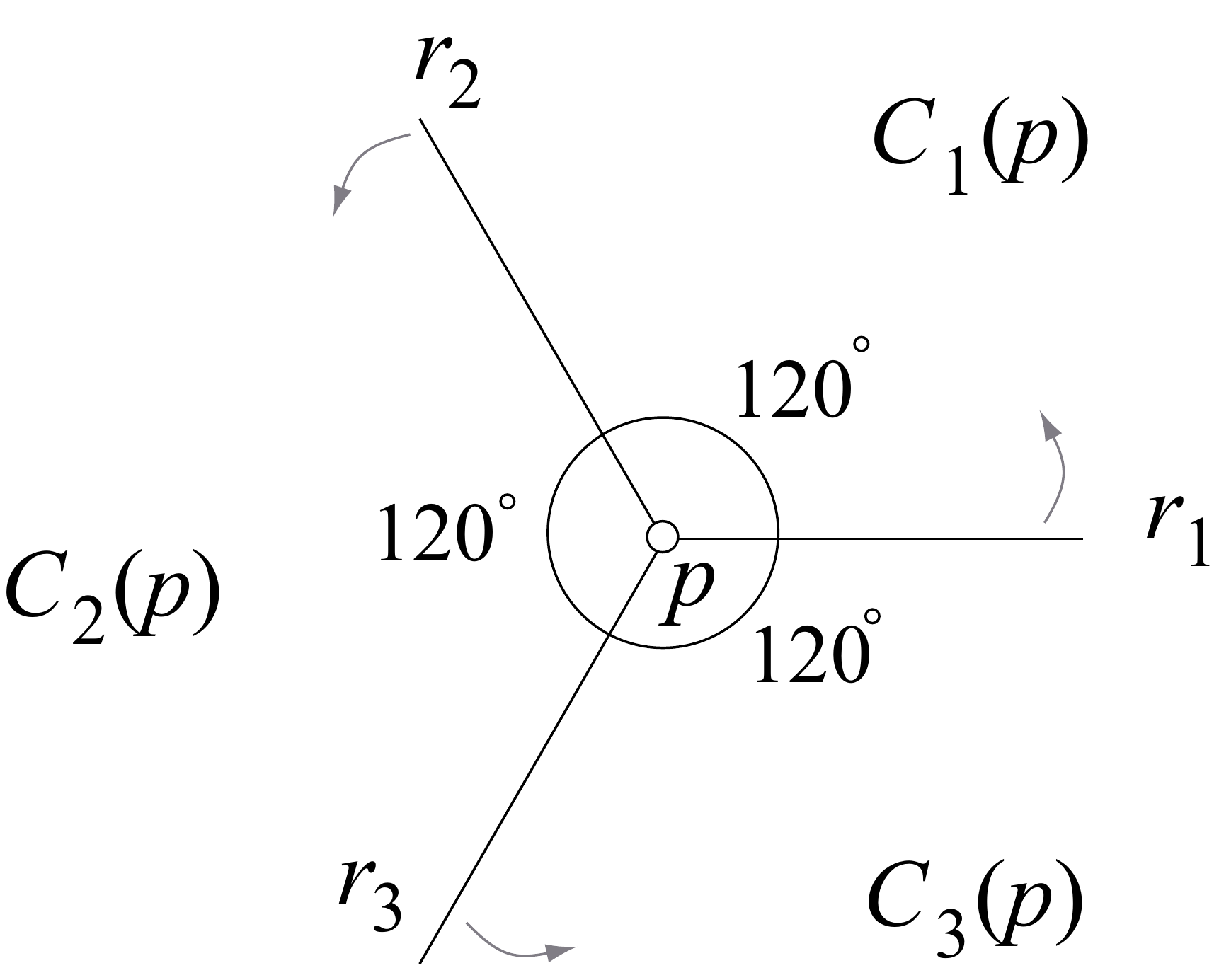}
\caption{Rays and (half-open, half-closed) cones used in constructing $Y_3$.}
\label{fig:y3defs}
\end{figure}
%
For any point $p \in S$, let $x(p)$ denote
the $x-$coordinate of $p$ and $y(p)$ denote the $y-$coordinate of $p$. For any
$p, q \in S$, let $|pq|$ denote the Euclidean distance between $p$ and $q$.
For any point $p \in S$ and any real value $\delta > 0$, let $D(p, \delta)$ be
the closed disk with center $p$ and radius $\delta$.

\section{Connectivity of $Y_4$}
\label{sec:y4}
In this section we derive tight lower and upper bounds on the
connectivity radius $d$ for $Y_4[G^d]$. Recall that our work
relies on the assumption that $G^1$ is connected.

\begin{theorem}
There exist point sets $S$ with the property that $G^1(S)$ is connected,
but $Y_4[G^d]$ is disconnected, for any $1 \le d < \sqrt{2}$.
\label{thm:lb4}
\end{theorem}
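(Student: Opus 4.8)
\emph{Proof sketch (plan).} The plan is to build, for each fixed $d$ with $1 \le d < \sqrt2$, a four-point set $S=\{p,p',q,q'\}$ whose $Y_4[G^d]$ splits into the two components $A=\{p,p'\}$ and $B=\{q,q'\}$. First I would fix the ``bridge'' $p=(0,0)$, $q=(1,0)$, so that $pq$ is a unit edge, and then place a \emph{blocker} $p'=(\e,h)$ just inside the first cone $C_1(p)$ --- almost on the positive $y$-axis (so its direction from $p$ is just below $90^\circ$) and at distance just under $1$ from $p$ --- together with its mirror image $q'=(1-\e,-h)$, which then lies just inside $C_3(q)$. Here $\e>0$ is small and $h$ is slightly below $\sqrt{1-\e^2}$, both to be pinned down in the last step; note the construction will depend on $d$.

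Next I would check the two structural claims. For connectivity of $G^1$: among the six pairs, $pp'$ and $qq'$ have length $<1$ and $pq$ has length exactly $1$, while (for the parameters chosen below) $|pq'|=|p'q|>1$ and $|p'q'|>1$; hence $G^1$ is the path $p'\!-\!p\!-\!q\!-\!q'$ and is connected. For disconnectedness of $Y_4[G^d]$: since both $q$ and $p'$ lie in $C_1(p)$ but $|pp'|<1=|pq|$, node $p$ chooses $p'$ (not $q$) in that cone; symmetrically $q$ chooses $q'$ (not $p$) in $C_3(q)$, which contains $p$ at direction $180^\circ$. Therefore neither $\arr{pq}$ nor $\arr{qp}$ is in $\arr{Y_4}[G^d]$, so $pq\notin Y_4[G^d]$; and since the $Y_4$-edge selected at $p$ (resp.\ $q$) stays inside $A$ (resp.\ $B$), the two pairs are internally connected. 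It then suffices to guarantee that the remaining three cross pairs $p'q$, $pq'$, $p'q'$ never appear in $G^d$ at all (if, say, $p'q$ were present, $q$ would have no competing edge in the cone containing $p'$ and would select it).

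This last guarantee is the only delicate point, and it is exactly where the constant $\sqrt2$ enters. A computation gives $|p'q|^2=(1-\e)^2+h^2$, which tends to $2$ as $\e\to0$ and $h\to\sqrt{1-\e^2}$; so for any $d<\sqrt2$ I would pick $\e<1-d^2/2$ (positive since $d^2<2$) and then $h$ with $h^2$ in the interval $\bigl(d^2-1+2\e-\e^2,\;1-\e^2\bigr)$, which is nonempty precisely because $\e<1-d^2/2$. This choice simultaneously gives $|pp'|=|qq'|<1$, $|p'q|=|pq'|>d\ge 1$, and (taking $h$ near the top of that interval) $|p'q'|>d$ as well, since $|p'q'|\to\sqrt5$ as $\e\to0$. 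With those three long cross pairs absent from $G^d$, the only cross pair in $G^d$ is $pq$, which is not in $Y_4[G^d]$, so $Y_4[G^d]$ is disconnected while $G^1$ is connected. The conceptual obstacle --- worth flagging explicitly --- is that a blocker placed in $p$'s cone toward $q$ at distance below $1$ is necessarily at distance strictly below $\sqrt2$ from $q$; this both explains why the construction must depend on $d$ and foreshadows why $d=\sqrt2$ will turn out to be sufficient.
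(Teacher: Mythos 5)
Your construction is correct and is essentially the paper's argument in minimal form: a unit edge $pq$ together with a blocker for each endpoint, lying in the same quarter-cone as the opposite endpoint, at distance just under $1$ but in a nearly perpendicular direction, so that its distance to the opposite endpoint approaches $\sqrt{1+1}=\sqrt{2}$ and hence exceeds $d$. The paper's version differs only in orientation and in appending a chain of unit-spaced points (so the same figure can be reused for the $Y_2$ lower bound); your four-point set, cone memberships, and parameter window $\e<1-d^2/2$, $h^2\in\bigl(d^2-1+2\e-\e^2,\,1-\e^2\bigr)$ all check out.
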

\begin{proof}
We construct a point set $S$ that meets the conditions of the theorem.
Note that $d < \sqrt{2}$ implies that $1-\sqrt{d^2-1} > 0$, meaning that
there exists a real value $\e$ such that $0 < \e < 1-\sqrt{d^2-1}$,
which is equivalent to
\[
    1 + (1-\e)^2 > d^2
\]
Let $p$ and $q$ be the endpoints of a vertical segment of length $1$,
with $p$ below $q$. In Figure~\ref{fig:y2disconnect} the segment
$pq$ is shown slightly slanted to the left, merely to reinforce
our convention that $pq \in C_2(p)$ and $qp \in C_4(q)$.
\begin{figure}[hptb]
\centering
\includegraphics[width=\linewidth]{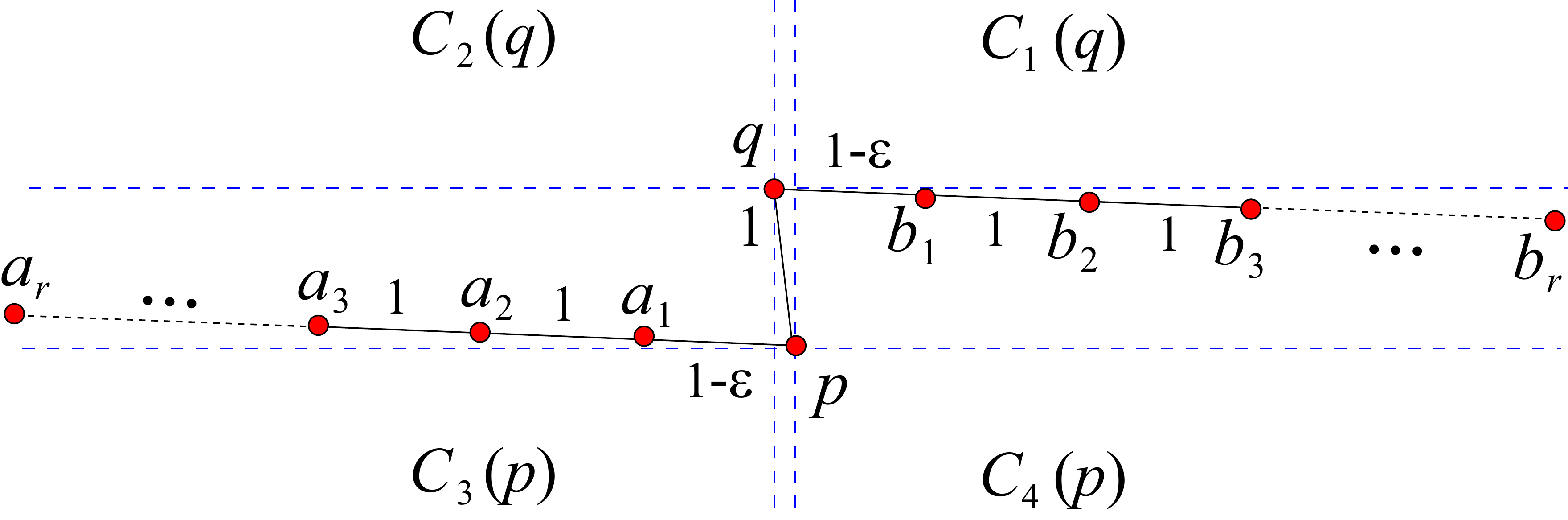}
\caption{Point set $S$ and $G^1(S) \equiv G^d(S)$, with $d < \sqrt{2}$; $Y_4[G^d]$ is disconnected.}
\label{fig:y2disconnect}
\end{figure}
Shoot a horizontal ray from $p$ leftward, then slightly rotate it clockwise
about $p$ by a tiny angle $\alpha$, so that the ray lies entirely in $C_2(p)$.
Distribute points
$a_1, a_2, \ldots, a_r$ in this order along this ray such that $|pa_1| = 1-\e$,
and $|a_ia_{i+1}| = 1$, for each $i$. Let $b_i$ be the point symmetric to $a_i$ with
respect to the midpoint of $pq$. Let
\[S = \{p, q, a_i, b_i ~|~ 1 \le i \le r\}.
\]

In the limit, as $\alpha$ approaches $0$,
the angle $\angle a_1pq$ approaches $\pi/2$ and $|a_1q| = \sqrt{1 + (1-\e)^2} > d$.
This means that $a_1q$ is not an edge in $G^d$. Because
$|a_ib_j| > |a_i q| \ge |a_1q| > d$ for each $i, j \ge 1$, we have that no $a-$point
is directly connected to a $b-$point in $G^d$. It follows that the graph $G^d$
is a path (depicted in Figure~\ref{fig:y2disconnect}).

We now show that $pq \not\in Y_4[G^d]$, which along with the fact that
$G^d$ is a path, yields that claim that $Y_4[G^d]$ is disconnected.
First note that $\arr{pq}$ is not an edge in $Y_4[G^d]$. This is
because $a_1$ is in the same cone $C_2(p)$ as $q$, and $|pa_1| = 1-\e < 1 = |pq|$.
Similarly, $\arr{qp}$ is not an edge in $Y_4[G^d]$, because
$b_1$ is in the same cone $C_4(q)$ as $p$, and $|qb_1| = 1-\e < 1 = |qp|$.
We conclude that $Y_4[G^d]$ is disconnected.
\end{proof}

\paragraph{Upper Bound $d = \sqrt{2}$}
We now show that $Y_4[G^d]$ is always connected for $d = \sqrt{2}$,
matching the lower bound from Theorem~\ref{thm:lb4}.
First we introduce a few definitions.
%
\begin{figure}[htpb]
\centering
\includegraphics[width=\linewidth]{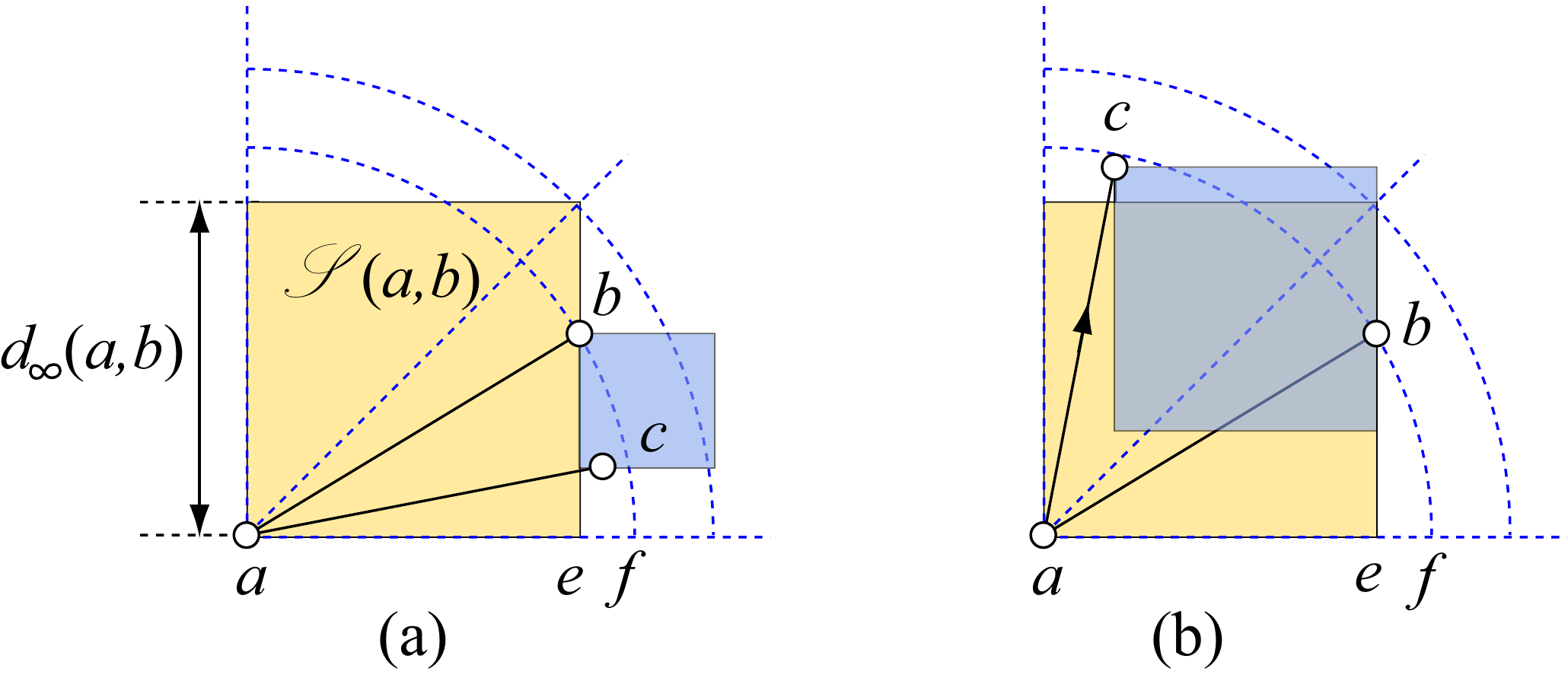}
\caption{Theorem~\ref{thm:y4connect}: $d_\infty(b, c) < d_\infty(a, b)$ (a)
$b, c$ lie on the same side of the bisector (b) $b, c$ lie on opposite sides
of the bisector.}
\label{fig:y4connect}
\end{figure}
%
For any pair of points $a, b$, let $d_\infty(a,b)$ denote the
$L_\infty$ distance between $a$ and $b$, defined as
\[ d_\infty(a, b) = \max\{|x(a)-x(b)|, |y(a)-y(b)|\}
\]
Let $\Sq(a, b)$ be the square with corner $a$ whose boundary contains $b$,
of side length $d_\infty(a, b)$ (see Figure~\ref{fig:y4connect}a).
The following inequalities follow immediately from the fact that $ab$ is a
line segment inside $\Sq(a, b)$:
\begin{equation}
d_\infty(a, b) \le |ab| \le d_\infty(a,b)\sqrt{2}
\label{eq:inf}
\end{equation}

\begin{theorem}
For any point set $S$ such that $G^1(S)$ is connected, $Y_4[G^{\sqrt{2}}]$ is also
connected.
\label{thm:y4connect}
\end{theorem}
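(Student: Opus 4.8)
The plan is to reduce to a statement about the edges of $G^1$ and then set up an induction. Since $1 < \sqrt 2$ we have $G^1 \subseteq G^{\sqrt 2}$, so it suffices to prove that the two endpoints of every edge of $G^1$ lie in a common connected component of $Y_4[G^{\sqrt 2}]$; connectivity of $Y_4[G^{\sqrt 2}]$ then follows from connectivity of $G^1$. I would actually prove the slightly more general claim: for every pair $a, b \in S$ with $d_\infty(a,b) \le 1$, the points $a$ and $b$ lie in the same component of $Y_4[G^{\sqrt 2}]$. Note that $d_\infty(a,b) \le 1$ already forces $|ab| \le \sqrt 2$ by (\ref{eq:inf}), so $ab$ is an edge of $G^{\sqrt 2}$.

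The argument is an induction over such pairs $(a,b)$, ordered lexicographically by $\big(d_\infty(a,b),\,|ab|\big)$, which is a well-ordering since $S$ is finite. If $ab \in Y_4[G^{\sqrt 2}]$ we are done. Otherwise neither $\arr{ab}$ nor $\arr{ba}$ is selected, so inside the cone $C_i(a)$ containing $b$ the edge selected at $a$ runs to some $c \ne b$ with $|ac| \le |ab|$; then $ac \in Y_4[G^{\sqrt 2}]$, so $a$ and $c$ share a component, and it remains to connect $b$ and $c$. The key fact is a monotonicity lemma: since $b$ and $c$ lie in a common axis-parallel quadrant with apex $a$, each coordinate difference between $b$ and $c$ is at most $d_\infty(a,b)$ — one checks this coordinate-wise, using $|ac| \le |ab|$ and $\sqrt{u^2+v^2} \le u+v$ — so $d_\infty(b,c) \le d_\infty(a,b) \le 1$ and $(b,c)$ is again an admissible pair. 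To close the induction one splits on the side of the bisector of $C_i(a)$ on which $b$ and $c$ lie. If they lie on the same side, then $\angle bac \le 45^\circ < 60^\circ$, so the standard narrow-cone estimate gives $|bc| < \max(|ab|, |ac|) = |ab|$; thus $(b,c)$ is strictly smaller in the lexicographic order and the inductive hypothesis applies. If they lie on opposite sides and neither of $b, c$ is on a bounding ray of $C_i(a)$, a short computation sharpens the monotonicity lemma to $d_\infty(b,c) < d_\infty(a,b)$, and again the inductive hypothesis applies to $(b,c)$.

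The step I expect to be the main obstacle is the opposite-sides case in its degenerate form: when the selected point $c$ lies on the \emph{closed} bounding ray $r_i$ of $C_i(a)$ while $b$ lies strictly on the side of the \emph{excluded} ray $r_{i+1}$ with $\angle bac > 45^\circ$. There one only obtains $d_\infty(b,c) = d_\infty(a,b)$, and $|bc|$ may exceed $|ab|$, so neither component of the lexicographic rank decreases. Handling this requires exploiting the half-open structure of the cones — a point of $S$ can never sit on the excluded ray $r_{i+1}$, which keeps $b$ in the interior — together with the observation that, viewed from $b$, the point $c$ then lies on the closed-ray side of its own cone, so the next step in the chain cannot again be degenerate; combining this with the facts that the Yao selection is deterministic (so a non-terminating chain would be eventually periodic) and that $d_\infty$ never increases along the chain, one concludes the chain terminates and hence $a$ and $b$ are connected. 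I would devote most of the write-up to making this termination argument precise and to disposing of selection ties ($|ac| = |ab|$), which are harmless because the tie edge $ac$ itself lies in $Y_4[G^{\sqrt 2}]$.
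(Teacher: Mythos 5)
Your overall strategy is the same as the paper's: measure cross-component pairs by $d_\infty$, use the square $\Sq(a,b)$, and split on whether $b$ and the selected point $c$ lie on the same or opposite sides of the bisector of the cone. Your handling of the same-side case and of the generic opposite-side case is correct, and you have correctly isolated the one genuinely problematic configuration ($c$ on the included bounding ray $r_i$, $b$ on the excluded-ray side at angle $>45^\circ$). But that case is a real gap in your write-up, and the termination argument you sketch does not close it. Concretely, with $a=(0,0)$, $b=(0.1,1)$, $c=(1,0)$ one gets $|ac|<|ab|$, $d_\infty(b,c)=d_\infty(a,b)=1$, and $|bc|=\sqrt{1.81}>|ab|=\sqrt{1.01}$, so your lexicographic rank strictly \emph{increases} at this step. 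Your proposed repair --- the chain is deterministic, hence eventually periodic, $d_\infty$ never increases, and two consecutive steps cannot both be degenerate --- is not enough: on the eventual cycle $d_\infty$ is constant, degenerate steps may raise $|{\cdot}|$ while same-side steps lower it, so a cycle alternating degenerate and same-side steps is not excluded by anything you state. As written, the induction does not go through.

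The fix is local, not global, and needs no periodicity argument. After a degenerate step $(a,b)\to(b,c)$, the vector $c-b$ necessarily lies in the half of $b$'s cone adjacent to that cone's \emph{included} bounding ray, strictly within $45^\circ$ of it (in the example above, $c-b=(0.9,-1)$ sits within $45^\circ$ of the downward ray $r_4$ of $C_4(b)$). Now run the identical case analysis once more on the pair $(b,c)$, using the edge $\arr{be}\in Y_4[G^{\sqrt2}]$ selected at $b$ in that cone: if $e\in\Sq(b,c)$ or $e$ lies in the same half as $c$, you get $d_\infty(c,e)<d_\infty(b,c)$ as before; if $e$ lies in the opposite half, it sits next to the \emph{excluded} ray, so the half-open cone convention forbids the degenerate sub-case and again $d_\infty(c,e)<d_\infty(b,c)=d_\infty(a,b)$. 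Thus every pair descends strictly in $d_\infty$ within at most two steps, and a well-founded induction on $d_\infty$ alone (or, as the paper does it, a globally $d_\infty$-minimal cross-component pair) finishes the proof. Note that the paper sidesteps the issue by rotating so that $b$ lies in the half of the cone adjacent to the included ray; the mirrored orientation you worry about is exactly the one that needs this extra two-step descent, so your instinct about where the difficulty lives is right --- only the resolution is missing.
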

\begin{proof}
The proof is by contradiction.
Assume to the contrary that $G^1$ is connected, but
$Y_4[G^{\sqrt{2}}]$ is disconnected. Then $Y_4[G^{\sqrt{2}}]$ has at least two
connected components, say $J_1$ and $J_2$.
Since $G^1 \subseteq G^{\sqrt{2}}$ is connected,
there is an edge $pq \in G^1$, with $p \in J_1$ and $q \in J_2$.
To derive a contradiction, consider two points $a, b \in S$, with
$a \in J_1$ and $b \in J_2$, that minimize $d_\infty(a, b)$. Then
\begin{eqnarray*}
d_\infty(a, b) & \le & d_\infty(p, q) ~~~\mbox{(by choice of $ab$)}\\
               & \le & |pq| ~~~~~~~~~\mbox{(by ~(\ref{eq:inf}))} \\
               & \le & 1  ~~~~~~~~~~~~\mbox{(because $pq \in G^1$)}
\end{eqnarray*}
This along with the second inequality from~(\ref{eq:inf}) implies
$|ab| \le \sqrt{2}$, therefore $ab \in G^{\sqrt{2}}$.
To simplify our analysis, rotate $S$ so that $b$ lies in the
lower half of $C_1(a)$. 

If $ab \in Y_4[G^{\sqrt{2}}]$, then $ab$ connects $J_1$ and $J_2$,
contradicting our assumption that $J_1$ and $J_2$ are disjoint connected
components. So $ab \not\in Y_4[G^{\sqrt{2}}]$. However $ab \in G^{\sqrt{2}}$
and $b \in C_1(a)$, therefore
there is $\arr{ac} \in Y_4[G^{\sqrt{2}}]$, with
$c \in C_1(a)$ and $|ac| \le |ab|$. If $c$ lies inside $\Sq(a, b)$, then
$d_\infty(b, c) < d_\infty(a, b)$, because each of the horizontal and
vertical distance between $b$ and $c$ is strictly smaller than the
side length of $\Sq(a, b)$. This along with the fact that
$bc$ connects $J_1$ and $J_2$, contradicts our choice of $ab$.
So $c$ must lie outside of $\Sq(a, b)$ (but not outside of $D(a, |ab|)$,
because $|ac| \le |ab|$).

Let $e$ be the lower right corner of $\Sq(a, b)$, and let $f$ be intersection
point between the boundary of $D(a, |ab|)$ and the horizontal ray through $a$
in the direction of $e$ (see Figure~\ref{fig:y4connect}a).
We will be using the fact that
\begin{equation}
|be| > |ef|
\label{eq:circle}
\end{equation}
(This follows from the fact that $\angle bfe = \angle fba > \angle fbe$,
and the Law of Sines applied on $\triangle bef$.)

We now derive a contradiction to our choice of $ab$ as follows.
If both $b$ and $c$ lie in the lower half of $C_1(a)$,
as depicted in Figure~\ref{fig:y4connect}a, then
$|y(b)-y(c)| < d_\infty(a, b)$. Also $|x(b)-x(c)| < |ef|$, which
by inequality~(\ref{eq:circle}) is no longer than $d_\infty(a, b)$.
It follows that $d_\infty(b, c) < d_\infty (a, b)$, which
along with the fact that $bc$ connects $J_1$ and $J_2$,
contradicts our choice of $ab$.
If $b$ and $c$ lie on opposite sides of the bisector of $C_1(a)$,
as depicted in Figure~\ref{fig:y4connect}b, then
the vertical distance from $c$ to the top side of $\Sq(a, b)$ is smaller
than $|ef|$, which in turn is smaller than $|be|$
(by inequality~(\ref{eq:circle})). It follows that
$|y(c)- y(b)| < d_\infty(a, b)$. Also, because $c$ lies strictly to the
right of $a$, we have that $|x(c)- x(b)| < d_\infty(a, b)$. These together
show that $d_\infty(b,c) < d_\infty(a, b)$. This  along with the
fact that $bc$ connects $J_1$ and $J_2$, contradicts our choice of $ab$.
\end{proof}
Theorems~\ref{thm:lb4} and~\ref{thm:y4connect} together establish
matching lower and upper bounds (equal to $\sqrt{2}$) for the
connectivity radius of $Y_4$.

\section{Connectivity of $Y_3$}
The Yao graph $Y_3$ has three outgoing edges per node, compared to
four outgoing edges in the case of $Y_4$. So one would expect that
the radius necessary to maintain $Y_3$ connected would exceed the radius
necessary to maintain $Y_4$ connected. However, our results show that
an antenna radius equal to $\frac{2}{\sqrt{3}} < \sqrt{2}$ suffices to maintain $Y_3$
connected. This is a surprising result, given that a radius of
$\sqrt{2}$ is necessary and sufficient to maintain $Y_4$ connected,
as established in the previous section.

\begin{theorem}
There exist point sets $S$ with the property that $G^1(S)$ is connected,
but $Y_3[G^d]$ is disconnected, for any $1 \le d < 5-\frac{2}{3}\sqrt{35}$.
\label{thm:y3lb}
\end{theorem}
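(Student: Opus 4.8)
The plan is to mimic the structure of the lower-bound construction for $Y_4$ (Theorem~\ref{thm:lb4}): build a point set whose disk graph $G^1$ is a simple path, with two special adjacent vertices $p$ and $q$ straddling a "bottleneck'' edge $pq$, and arrange the neighbouring points so that in every cone that could carry the edge $pq$ (at $p$) or $qp$ (at $q$), there is a strictly shorter competing edge. Since $G^d$ will equal $G^1$ for the target range of $d$, killing the single edge $pq$ disconnects $Y_3[G^d]$. The three cones of $Y_3$ have opening angle $2\pi/3$, so the geometry is different from $Y_4$: the key quantity to control is how far apart two points in the same $2\pi/3$ cone can be while both staying within distance $d$ of the apex, and this is where the value $5-\tfrac{2}{3}\sqrt{35}$ should emerge.

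Concretely, I would place $p$ and $q$ a unit distance apart, say $q$ directly above $p$, so that the segment $pq$ (perturbed infinitesimally as in Theorem~\ref{thm:lb4} to fix cone membership) lies in one cone $C_i(p)$ and the reverse edge lies in one cone $C_j(q)$. To prevent $\arr{pq}$ from being selected, I introduce a point $a_1$ in the same cone as $q$ with $|pa_1| = 1-\e$ for a tiny $\e>0$; symmetrically a point $b_1$ in the cone of $q$ that contains $p$, with $|qb_1| = 1-\e$. The first obstacle, and the one that pins down the bound, is choosing the \emph{direction} of $pa_1$ and $pb_1$ so that (i) $a_1$ (resp.\ $b_1$) really lies in the same $2\pi/3$ cone as $q$ (resp.\ $p$), and (ii) $|a_1q|>d$ and $|b_1p|>d$ so these points contribute no edges back to $\{p,q\}$, and $|a_1 b_1|$ and all longer $a$–$b$ distances exceed $d$ so the two halves stay separated. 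Extremising the worst case of requirement (ii) over the admissible cone of directions is a short optimisation whose solution is the stated threshold $d = 5-\tfrac{2}{3}\sqrt{35}$; for any $d$ below it one has slack to carry out the perturbation.

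After $p,q,a_1,b_1$ are fixed I would extend each side into a path just as before: place $a_2,a_3,\dots$ along a ray so that $|a_i a_{i+1}| = 1$ and consecutive $a$-points see each other in $G^1$ but non-consecutive ones do not, and mirror this to get the $b$-points; one must also check that each $a_i$ for $i\ge 2$ has its chosen $Y_3$ edges pointing \emph{away} from the bottleneck (toward $a_{i\pm1}$), which follows from the path being essentially straight and the cones being wide. Then $G^1(S)=G^d(S)$ is a path with $pq$ as its unique edge crossing the cut $\{p,a_i\}\,/\,\{q,b_i\}$, and the argument of Theorem~\ref{thm:lb4} applies verbatim: $\arr{pq}\notin Y_3[G^d]$ because $|pa_1|<|pq|$ with $a_1,q$ co-conal at $p$, and $\arr{qp}\notin Y_3[G^d]$ by the symmetric reason, so $Y_3[G^d]$ is disconnected. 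The only real work is the two-line trigonometric optimisation in the middle step; everything else is a transcription of the $Y_4$ construction with $\pi/2$ replaced by $2\pi/3$.
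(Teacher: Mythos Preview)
Your high-level strategy is right --- make $G^1$ a path with a single bottleneck edge $pq$, block $\arr{pq}$ and $\arr{qp}$ by closer points in the relevant cones, and keep all cross-distances above $d$ --- but the claim that ``everything else is a transcription of the $Y_4$ construction with $\pi/2$ replaced by $2\pi/3$'' conceals a genuine obstruction, and the assertion that the threshold $5-\tfrac23\sqrt{35}$ drops out of a ``two-line trigonometric optimisation'' on the four-point core $\{p,q,a_1,b_1\}$ is not correct.

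The difficulty is that the $Y_3$ cones at $p$ and at $q$ are not related by the point symmetry that made the $Y_4$ construction work. If $pq$ lies near a boundary ray of $C_i(p)$ (so that $a_1$ can sit at a large angle $\angle a_1pq$), then $qp$ lies near the \emph{bisector} of its cone at $q$, which caps $\angle b_1qp$ at about $\pi/3$ and forces $|pb_1|\le 1$. If instead you balance the two sides (e.g.\ $pq$ at angle $\pi/6$, with $a_1$ and $b_1$ pushed to the extreme rays so that $\angle a_1pq=\angle b_1qp=\pi/2$), a direct computation gives $b_1-a_1=q-p$, hence $|a_1b_1|=1$ exactly, independent of $\e$. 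In every placement of the four-point core one of the required inequalities $|a_1q|>d$, $|pb_1|>d$, $|a_1b_1|>d$ fails already at $d=1$, so your construction does not produce any $d>1$, let alone the stated bound.

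The paper circumvents this by inserting \emph{two extra points} $x$ and $y$ into the path (so the core is $a_1,x,p,q,y,b_1$), placed asymmetrically --- at the one-third and two-thirds marks along the legs of an auxiliary trapezoid, followed by a reflection --- so that $x$ (not $a_1$) blocks $\arr{pq}$ and $y$ (not $b_1$) blocks $\arr{qp}$. The binding distance constraints become $|xy|>d$ and $|a_1b_1|=1+\e>d$, and it is the simultaneous solvability of this pair that yields $d<5-\tfrac23\sqrt{35}$. That mechanism is absent from your proposal.
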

\begin{proof}
We construct a point set $S$ that satisfies the conditions of the theorem.
Start with an isosceles trapezoid $pa_1b_1q$ of unit altitude and bases $pq$ and
$a_1b_1$, with $|pq| = 1$ and $|a_1b_1| = 1+\e$, for some small real value
$0 < \e < 1$, to be determined later. Place a point
$x$ on $pa_1$ at distance $|pa_1|/3$ from $p$, and a second point
$y$ on $qb_1$ at distance $|qb_1|/3$ from $b_1$. Then simply reflect
$px$ about the vertical line through $p$, and $qy$ about the vertical
line through $q$. As we will later see, this places $px$ and $pq$ in the
same cone of $p$ (after a $90^\circ$ counterclockwise rotation), so that
$px$ and $pq$ compete in the edge selection process at $p$.

The result is the shaded polygon depicted in Figure~\ref{fig:y3disconnect}a.
%
\begin{figure}[hptb]
\centering
\includegraphics[width=\linewidth]{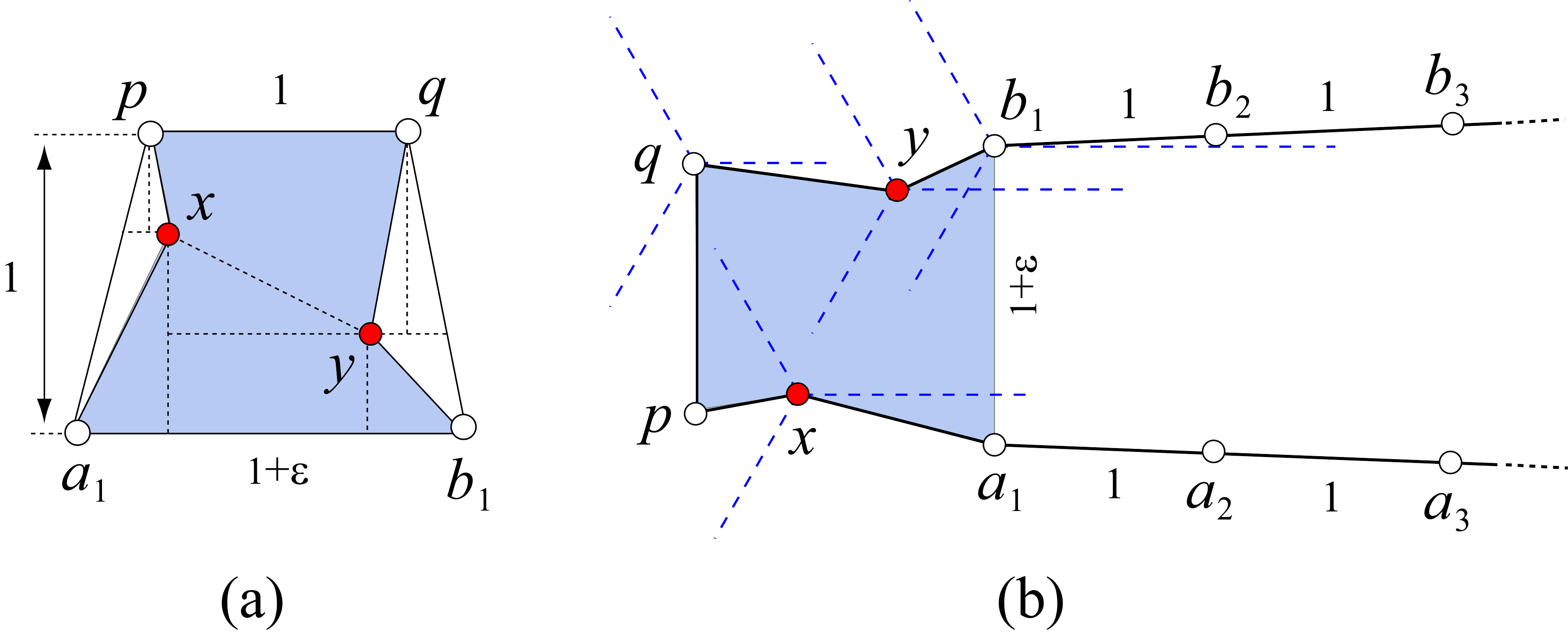}
\caption{(a) Construction of $pxa_1b_1yq$ (b) Point set $S$
and $G^d(S)$, with $1 \le d < 10/9$.}
\label{fig:y3disconnect}
\end{figure}
%
Simple calculations show that the vertical distance between $x$ and $y$ is $1/3$, and the
horizontal distance between $x$ and $y$ is
\[
1-\frac{1}{3}\cdot\frac{\e}{2} - \frac{2}{3}\cdot\frac{\e}{2} = 1 -\frac{\e}{2}
\]
It follows that $|xy|^2 = \left(\frac{1}{3}\right)^2+\left(1-\frac{\e}{2}\right)^2
= 1 + \frac{9\e^2-36\e+4}{36}$. We will later
require that $|xy| > d$ and $|a_1b_1| > d$, so that neither $xy$ nor $a_1b_1$ is
a candidate for $Y_3[G^d]$. These two inequalities reduce to
\[
\begin{cases}
  9\e^2-36\e+40-36d > 0 \\
  1+\e > d
\end{cases}
\]
Simple calculations yield the solution
\begin{align}
1 \le d &<   5-\frac{2}{3}\sqrt{35} \\
d-1 < \e &<  2 - \frac{2}{3}\sqrt{9d-1}\label{eq:de}.
\end{align}

\noindent
By the triangle inequality, $|xa_1| < 2/3 + \e/2$.
It can be easily verified that the above constraints on $\e$ and $d$
yield $|xa_1| < 1$. Similarly, each of $px$, $qy$ and $yb_1$ has
length less than $1$. Also note that
\[|xq| > |xy| > d,\]
since the horizontal distance between $x$ and $q$ is greater than the
horizontal distance between $x$ and $y$, and the vertical distance is
$1/3$ in both cases. Similarly, $|a_1y| > |xy| > d$.

We are now ready to construct $S$.
Start by rotating the polygon $pxa_1b_1yq$ counterclockwise by $90^\circ$, so that
it lies  on its side, as in Figure~\ref{fig:y3disconnect}b.
Shoot a horizontal ray rightward from $a_1$, then rotate it slightly clockwise
so that it lies entirely in $C_3(a_1)$.
Distribute points $a_2, a_3, \ldots, a_r$ at unit intervals along this ray.
Let $b_i$ be the reflection of $a_i$ with respect to the horizontal through
the midpoint of $pq$, for each $i > 1$. Our point set is
\[
S = \{p, q, x, y, a_1, a_2, \ldots, a_r, b_1, b_2, \ldots, b_r\}.
\]
The graph $G^1$ is a path (depicted in Figure~\ref{fig:y3disconnect}b)
and is therefore connected. We now show that $Y_3[G^d]$ is disconnected.

By construction, the following inequalities hold:
$|a_1b_1|$ $>$ $d$; $|xq| > |xy| > d$; $|a_1q| > |a_1y| > |xy| > d$; 
and $|a_ib_{i+j}| > |a_ib_i| \ge |a_1b_1| > d$, for any
$i \ge 1$ and any $j \ge 0$ (because $\angle a_ib_ib_{i+j}$ is obtuse).
By symmetry, similar arguments hold for the $b-$points as well.
It follows that the graph $G^d$ is a path identical to $G^1$,
therefore the removal of any edge from $G^d$ disconnects it.

Next we show that $pq \not\in Y_3[G^d]$, which along with the observation above
implies that $Y_3[G^d]$ is disconnected.
By construction, $|px| < |pq|$. This
along with the fact that both $x$ and $q$ lie in the same cone
$C_1(p)$, implies that $p$ does not select $pq$ for inclusion
in $Y_3[G^d]$.
Similarly, $|qy| < |qp|$. This along with the fact that
both $y$ and $p$ lie in the same cone $C_3(q)$, implies that
$q$ does not select $qp$ for inclusion in $Y_3[G^d]$. These together show that
$pq \not\in Y_3[G^d]$, therefore $Y_3[G^1]$ is disconnected.
\end{proof}

\paragraph{Upper Bound $d \le 2/\sqrt{3}$}
Next we derive an upper bound on the connectivity radius for $Y_3$.
The approach adopted here is somewhat similar to
the one employed in the proof of Theorem~\ref{thm:y4connect}, but it uses a generalized
distance function $d_R$ (in place of $d_\infty$), to measure the distance
between two connected components.
%
\begin{figure}[hp]
\centering
\includegraphics[width=0.5\linewidth]{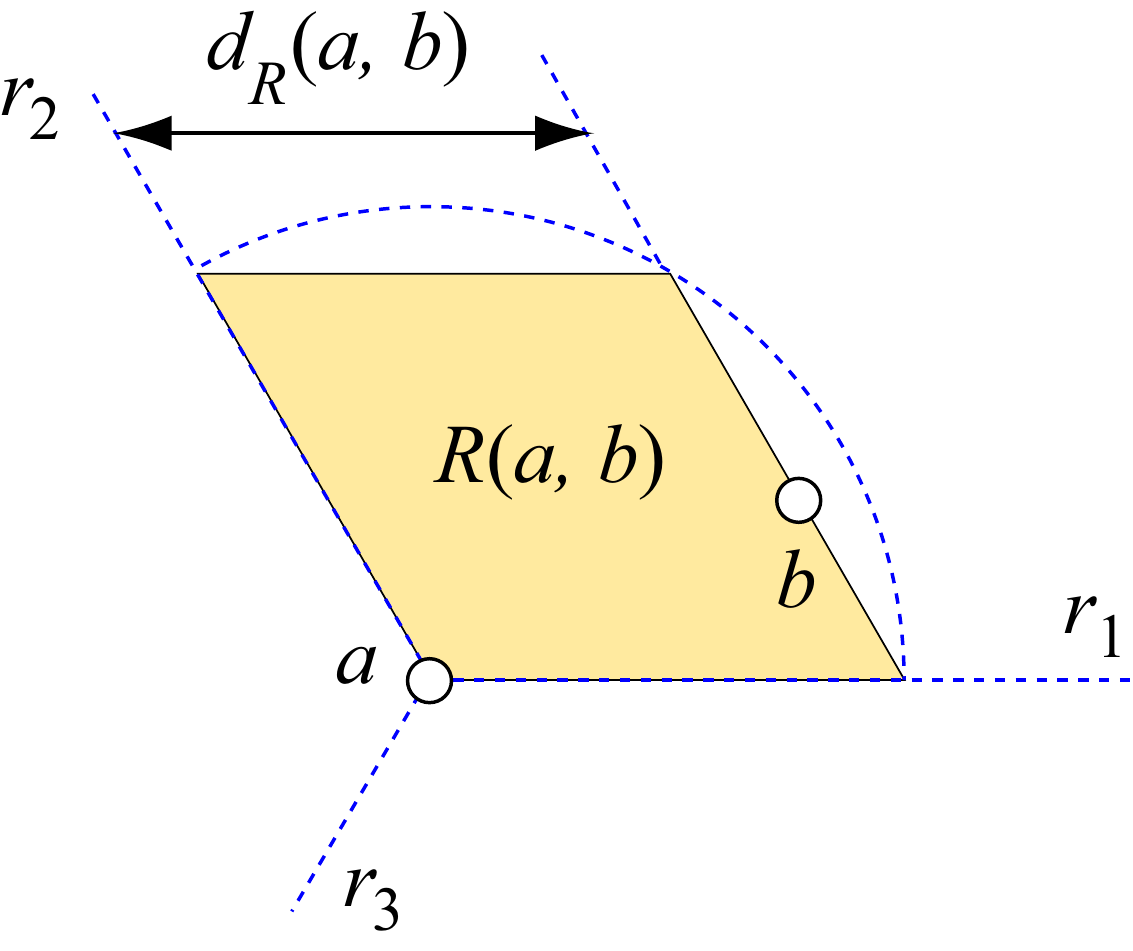}
\caption{Rhombus $R(a, b)$ of side length $d_R(a, b)$.}
\label{fig:Rdef}
\end{figure}
%
We define $d_R$ as follows. For any point $a \in S$ and any point $b \in C_i(a)$,
let $R(a, b)$ denote the closed rhombus with corner $a$ and edges parallel to $r_i$
and $r_{i+1}$, whose \emph{boundary} $\partial R(a, b)$ contains $b$ (see Figure~\ref{fig:Rdef}).
(Recall that $C_i(a)$ is the half-open cone with apex $a$ that includes
$r_i$ and excludes $r_{i+1}$.)
Define $d_R(a, b)$ to be the side length of $R(a, b)$. Clearly, $d_R(a,a) = 0$.
Because our approach does not use the triangle inequality on $d_R$, we skip the proof that
$d_R$ is a distance metric, and focus instead on the symmetry property of $d_R$
(Property (i) of Lemma~\ref{lem:dineq} below), and the relationship between $d_R$
and the Euclidean distance.

\noindent
\begin{lemma}
For any pair of points $a, b \in S$ the following properties hold:
\begin{enumerate}
\item[(i)] $d_R(a, b) = d_R(b, a)$.
\item[(ii)] $|ab| \le d_R(a, b)$.
\item[(iii)] $|ab| \ge d_R(a, b) \frac{\sqrt{3}}{2}$.
\end{enumerate}
\label{lem:dineq}
\end{lemma}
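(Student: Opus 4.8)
The plan is to introduce \emph{cone coordinates} for the pair $(a,b)$ and thereby reduce all three claims to elementary facts about two fixed unit vectors meeting at an angle of $\tfrac{2\pi}{3}$. Fix the index $i$ with $b\in C_i(a)$, and let $u_i,u_{i+1}$ be the unit vectors along the rays $r_i,r_{i+1}$; since $k=3$, the angle between them is $\tfrac{2\pi}{3}$. Write $b-a=\alpha u_i+\beta u_{i+1}$ with $\alpha,\beta\ge 0$ (possible and unique because $b$ lies in the closed cone spanned by $u_i$ and $u_{i+1}$). The first step is to verify that $d_R(a,b)=\max\{\alpha,\beta\}$: the rhombus with corner $a$, side length $s$, and edges along $u_i,u_{i+1}$ has its two ``far'' edges equal to $\{s\,u_i+t\,u_{i+1}:0\le t\le s\}\cup\{t\,u_i+s\,u_{i+1}:0\le t\le s\}$, and $b$ lies on this set exactly when $s=\max\{\alpha,\beta\}$; the degenerate case $\min\{\alpha,\beta\}=0$ (i.e.\ $b$ on a bounding ray) gives $d_R(a,b)=|ab|$ and is immediate.

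I would next dispatch (ii) and (iii) together. By the law of cosines, $|ab|^2=\alpha^2+\beta^2-\alpha\beta$. Writing $s=\max\{\alpha,\beta\}$ and (say) $\beta=\lambda s$ with $\lambda\in[0,1]$ — the case $\alpha\le\beta$ being symmetric — this becomes $|ab|^2=s^2\,(1-\lambda+\lambda^2)$. On $[0,1]$ the function $1-\lambda+\lambda^2$ attains its maximum $1$ at the endpoints and its minimum $\tfrac34$ at $\lambda=\tfrac12$, so $\tfrac{\sqrt3}{2}\,s\le |ab|\le s$, which is precisely (iii) and (ii).

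For the symmetry property (i), the key identity is $u_i+u_{i+1}+u_{i+2}=0$, which holds exactly because there are three equally spaced rays (this is where $Y_3$ is special, and the argument would not survive for general $k$). Assuming without loss of generality $\alpha\ge\beta$, I would use $u_i=-u_{i+1}-u_{i+2}$ to rewrite
\[
a-b \;=\; -\alpha\,u_i-\beta\,u_{i+1} \;=\; (\alpha-\beta)\,u_{i+1}+\alpha\,u_{i+2}.
\]
Both coefficients are nonnegative, so $a$ lies in the cone $C_{i+1}(b)$, and the displayed expression is already in cone coordinates for that cone; hence, by the first step applied to $(b,a)$, $d_R(b,a)=\max\{\alpha-\beta,\alpha\}=\alpha=d_R(a,b)$.

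The only real friction I anticipate is bookkeeping around the half-open convention for cones. When $\alpha=\beta$ (so $b$ lies on the bisector of $C_i(a)$), the vector $a-b$ points along the ray $r_{i+2}$, so $a$ actually lies in $C_{i+2}(b)$ rather than $C_{i+1}(b)$; and when $\beta=0$ the vector $a-b$ points along $-r_i$. In each such boundary situation one recomputes the cone coordinates of $a-b$ in the correct cone (again using $u_i+u_{i+1}+u_{i+2}=0$) and checks that the maximum of the two coefficients is still $\alpha$. None of these sub-cases is difficult, but all of them must be enumerated to make the proof of (i) airtight; everything else reduces to the two-variable computation above.
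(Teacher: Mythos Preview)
Your proof is correct and takes a genuinely different route from the paper's. The paper argues all three parts synthetically: for (i) it identifies an isosceles trapezoid $bcef$ with one side of length $d_R(a,b)$ and the opposite side of length $d_R(b,a)$; for (ii)--(iii) it observes that $R(a,b)$ splits along the bisector of $C_i(a)$ into two equilateral triangles of side $d_R(a,b)$, and that the segment $ab$ runs from a vertex of one of them to the opposite side, so its length lies between the altitude $\tfrac{\sqrt3}{2}\,d_R(a,b)$ and the side $d_R(a,b)$.

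Your approach replaces these pictures with a coordinate computation: the formula $d_R(a,b)=\max\{\alpha,\beta\}$ turns (ii)--(iii) into the elementary bound $\tfrac34\le 1-\lambda+\lambda^2\le 1$ on $[0,1]$, and the identity $u_i+u_{i+1}+u_{i+2}=0$ converts (i) into reading off the maximum of $\{\alpha-\beta,\alpha\}$. The paper's argument is shorter and more visual for (ii)--(iii); yours is more mechanical and makes the dependence on $k=3$ completely explicit (the vector identity fails for other $k$, as you note). Your cone-coordinate formula $d_R=\max\{\alpha,\beta\}$ is also a reusable byproduct. One small remark: your ``$\beta=0$'' case is not actually a half-open boundary issue for $(b,a)$, since $a-b=\alpha(u_{i+1}+u_{i+2})$ lands on the bisector of $C_{i+1}(b)$, safely in the interior; the only genuine boundary case for (i) is $\alpha=\beta$, which you handle correctly.
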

\begin{proof}
To simplify our analysis, rotate $S$ so that $b \in C_1(a)$, as
depicted in Figure~\ref{fig:dineq}.
Consider the quadrilateral $bcef$ from Figure~\ref{fig:dineq}a,
with sides $ce \in \partial R(a, b)$ and $bf \in \partial R(b, a)$;
$bc$ and $ef$ are parallel, since they are both parallel to $r_2$;
and $\angle cef$ and $\angle bfe$ are each $60^\circ$. These together show
that $bcef$ is an isosceles trapezoid, meaning that $|ce| = |bf|$. Since
$|ce| = d_R(a, b)$ and $|bf| = d_R(b, a)$, Property (i) holds.

%
\begin{figure}[htpb]
\centering
\includegraphics[width=\linewidth]{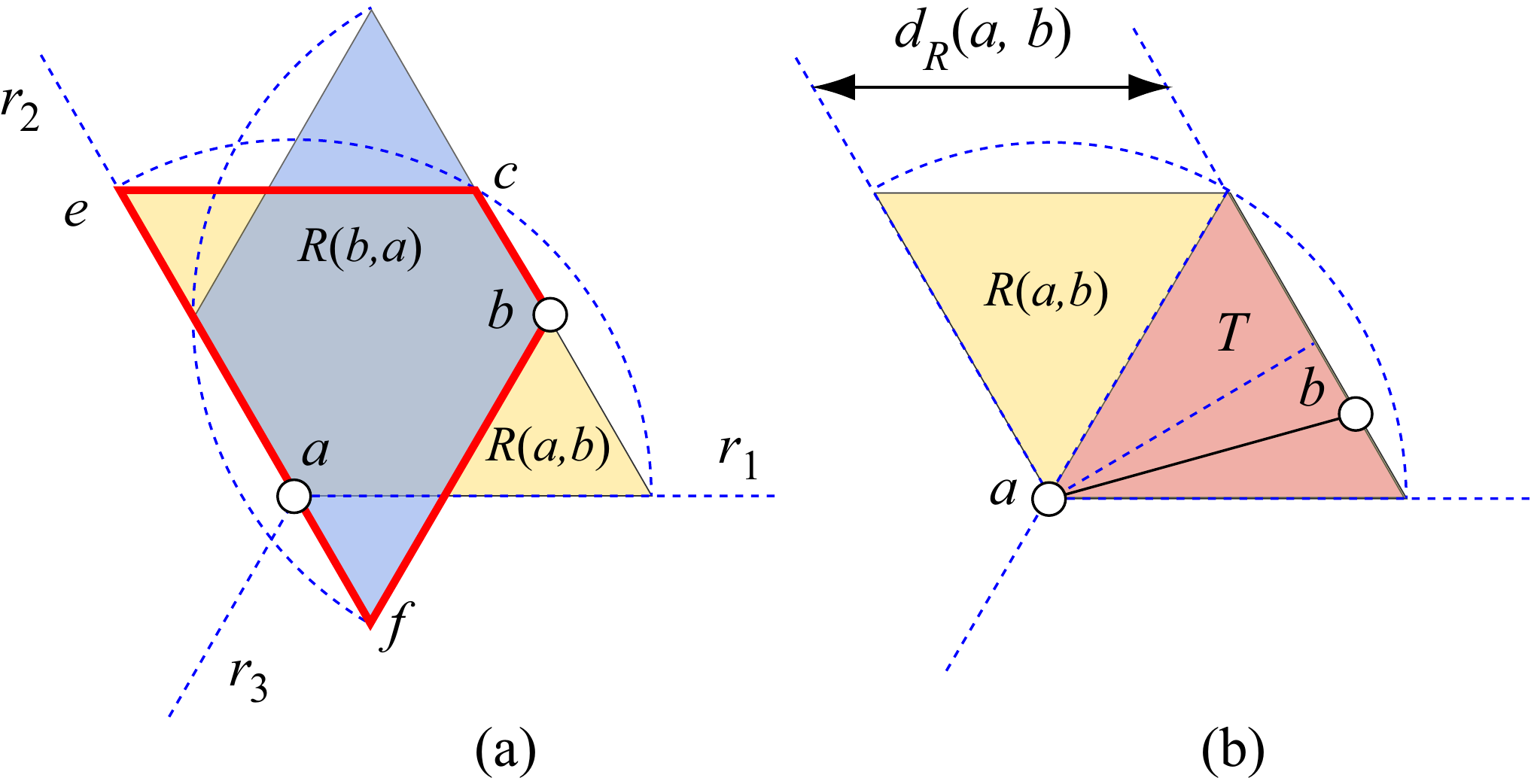}
\caption{Lemma~\ref{lem:dineq}: (a) $d_R(a, b) = d_R(b, a)$ (b) Relationship between $d_R(a, b)$ and $|ab|$.}
\label{fig:dineq}
\end{figure}
%
Now note that $R(a, b)$ is the union of two equilateral triangles of side length $d_R(a, b)$,
adjacent alongside the bisector of $C_1(a)$. Also
note that $ab$ is a segment that connects $a$ to the opposite side
in one of these triangles -- call it $T$.
It follows that $ab$ is no longer than the side of $T$ (thus yielding inequality (ii)),
and no shorter than the height of $T$ (thus yielding inequality (iii)). This completes the proof.
\end{proof}



Following is an intermediate result that will help prove our main upper bound result
stated in Theorem~\ref{thm:y3connect}.
This intermediate result will simply rule out some configurations that will occur in
the analysis of the main result. To follow the logical sequence of our analysis,
the reader can skip ahead to Theorem~\ref{thm:y3connect}, and refer back
to Lemma~\ref{lem:tri} only when called upon from Theorem~\ref{thm:y3connect}.

\begin{lemma}
Let $a, b, c \in S$ be such that $b, c \in C_i(a)$, for some $i \in \{1, 2, 3\}$,
and $|ac| \le |ab|$. Furthermore, assume that both $b$ and $c$ lie either in the
half of $C_i(a)$ adjacent to $r_i$ (excluding the bisector points), or in the
half of $C_i(a)$ adjacent to $r_{i+1}$ (including the bisector points).
Then $d_R(b,c) < d_R(a,b)$.
\label{lem:tri}
\end{lemma}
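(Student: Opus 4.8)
The plan is to adapt the proof of Theorem~\ref{thm:y4connect} to the hexagonal geometry underlying $d_R$, with the rhombus $R(a,b)$ playing the role of the square $\Sq(a,b)$. First I would normalize: since the three cones are related by $120^\circ$ rotations, rotate $S$ so that $i=1$ and place $a$ at the origin. Write $r=d_R(a,b)$, and let $H$ be the closed half of $C_1(a)$ containing $b$ (hence, by hypothesis, also $c$). As noted in the proof of Lemma~\ref{lem:dineq}, $R(a,b)$ is the union of two equilateral triangles of side $r$ joined along the bisector of $C_1(a)$; let $T$ be the one lying in $H$. Then $a$ is a vertex of $T$; its other two vertices $P$ and $P'$ lie at distance $r$ from $a$ on the two rays bounding $H$ (the cone ray $r_i$ or $r_{i+1}$ that borders $H$, together with the bisector of $C_i(a)$); and $b$ lies on the side $[P,P']$ of $T$ opposite $a$.

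It then suffices to establish the claim that every $z$ with $z\in H$, $z\ne a$, and $|az|\le r$ satisfies $d_R(b,z)\le r$, with equality only when $\{b,z\}=\{P,P'\}$. Granting this, the lemma follows: by Lemma~\ref{lem:dineq}(ii), $|ac|\le|ab|\le d_R(a,b)=r$, and $c\in H$ with $c\ne a$, so $d_R(b,c)\le r=d_R(a,b)$; and the half-cone hypothesis rules out $\{b,c\}=\{P,P'\}$, since it confines both $b$ and $c$ to a set containing at most one of $P,P'$ — the half adjacent to $r_i$ omits the bisector points, hence the bisector-vertex of $T$, whereas the half adjacent to $r_{i+1}$ omits the direction of $r_{i+1}$, hence the $r_{i+1}$-vertex of $T$ (recall $C_i(a)$ excludes $r_{i+1}$). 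Therefore $d_R(b,c)<d_R(a,b)$.

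To prove the claim, I would use that $d_R(u,v)$ depends only on $v-u$, so the region $\{w:d_R(b,w)\le r\}$ is the translate by $b-a$ of the regular hexagon $R'=\{w:d_R(a,w)\le r\}$ (six equilateral triangles of side $r$ around $a$), and that $[P,P']$ is one of its edges. The claim then reduces to $(D(a,r)\cap H)-b\subseteq R'$, which I would check against the six supporting half-planes of $R'$, one edge normal $\nu$ at a time: for one normal the crude bound $\langle z,\nu\rangle\le|z|\le r$ already suffices, while for the remaining five the constraint $z\in H$ sharpens the bound to $\langle z,\nu\rangle\le\frac{\sqrt3}{2}|z|$ or even to $\langle z,\nu\rangle\le 0$; in every case the required slack is there because $b$ lies on the chord $[P,P']$, whose position relative to each edge of $R'$ is completely determined. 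A short calculation then shows the only half-planes that can hold with equality are the two edges of $R'$ meeting $[P,P']$, and those force $\{b,z\}=\{P,P'\}$.

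I expect the crux to be this edge-by-edge estimate together with its equality case. The subtlety, absent in the $Y_4$ argument, is that the disk $D(a,|ab|)$ confining $c$ genuinely sticks out of $R(a,b)$: by Lemma~\ref{lem:dineq}(iii), $|ab|$ may be as large as $d_R(a,b)$, while the part of $\partial R(a,b)$ facing the bisector of $H$ sits only at distance $\frac{\sqrt3}{2}\,d_R(a,b)$ from $a$, so $c$ can lie outside $R(a,b)$ and one must still bound $d_R(b,\cdot)$ on that protruding crescent — a step akin to the circle comparison~(\ref{eq:circle}) used in the proof of Theorem~\ref{thm:y4connect}. Pinning down exactly when the estimate is tight, and confirming that the tight configuration is always excluded by the half-open convention in the hypothesis, is the other delicate point; indeed that convention is precisely what forces the inequality to be strict.
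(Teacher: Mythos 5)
Your overall strategy is viable and genuinely different from the paper's: the paper proves the lemma by a case analysis on which cone of $b$ contains $c$ (if $c\in C_1(b)$ or $c\in C_3(b)$ it compares two similar rhombi directly, and if $c\in C_2(b)$ it shows $R(b,c)\subset R(b,a)$), whereas you recast $d_R$ as the gauge of a regular hexagon and check the six supporting half-planes. The non-strict bound $d_R(b,z)\le r$ for all $z$ in the closed circular sector does come out of that computation. The gap is in your equality characterization, which is precisely where the strictness --- the entire content of the lemma --- is supposed to come from. It is \emph{not} true that equality forces $\{b,z\}=\{P,P'\}$. Take $i=1$, $r=1$, $a$ at the origin, $H$ the closed lower half of $C_1(a)$, so $P=(1,0)$ and $P'=(1/2,\sqrt3/2)$. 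Let $b=P$ and let $z=(1/4,\sqrt3/4)$ be the point of the bisector at distance $1/2$ from $a$. Then $z\in H$ (closed), $z\ne a$, $|az|\le r$, yet $z-b=(-3/4,\sqrt3/4)$ points at $150^\circ$, hence lies in $C_2(b)$, and writing it in the basis of unit vectors along $r_2$ and $r_3$ gives coefficients $1$ and $1/2$; so $d_R(b,z)=\max(1,\tfrac12)=1=d_R(a,b)$, with $\{b,z\}\ne\{P,P'\}$. Symmetrically, $b=P'$ with $z$ anywhere on the segment of $r_1$ between $a$ and $P$ (excluding $a$) also gives equality. In other words, a supporting half-plane of the hexagon can be tight whenever $b$ is an \emph{endpoint} of $[P,P']$ and $z$ lies anywhere on the bounding ray of $H$ opposite to $b$, not only at the single antipodal pair.

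The lemma itself survives, because each of these extra equality configurations places $b$ and $z$ on the two \emph{different} bounding rays of the closed half, and each half-open half in the hypothesis contains only one of those two rays; but your proof never makes that argument --- you invoke the half-open convention solely to rule out $\{b,c\}=\{P,P'\}$. As written, the ``short calculation'' you defer to would not show what you claim, and the argument would only deliver $d_R(b,c)\le d_R(a,b)$. The repair is to enumerate the full equality set (for $b$ in the relative interior of $[P,P']$ it is just $z=a$; for $b\in\{P,P'\}$ it additionally contains the opposite bounding ray of $H$) and verify that the half-open hypothesis excludes all of it.
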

\begin{proof}
To simplify our analysis, rotate $S$ so that both $b$ and $c$ lie in the lower half of
$C_1(a)$ (adjacent to $r_1$). Let $\delta = d_R(a, b)$.
By Lemma~\ref{lem:dineq}(ii), $|ab| \le \delta$. This along with $|ac| \le |ab|$ implies that
$c \in D(a, \delta)$. More precisely, $c$ lies in a circular sector of angle $60^\circ$,
formed by the intersection between $D(a, \delta)$ and the lower half of $C_1(a)$.

\begin{figure}[hptb]
\centering
\includegraphics[width=0.5\linewidth]{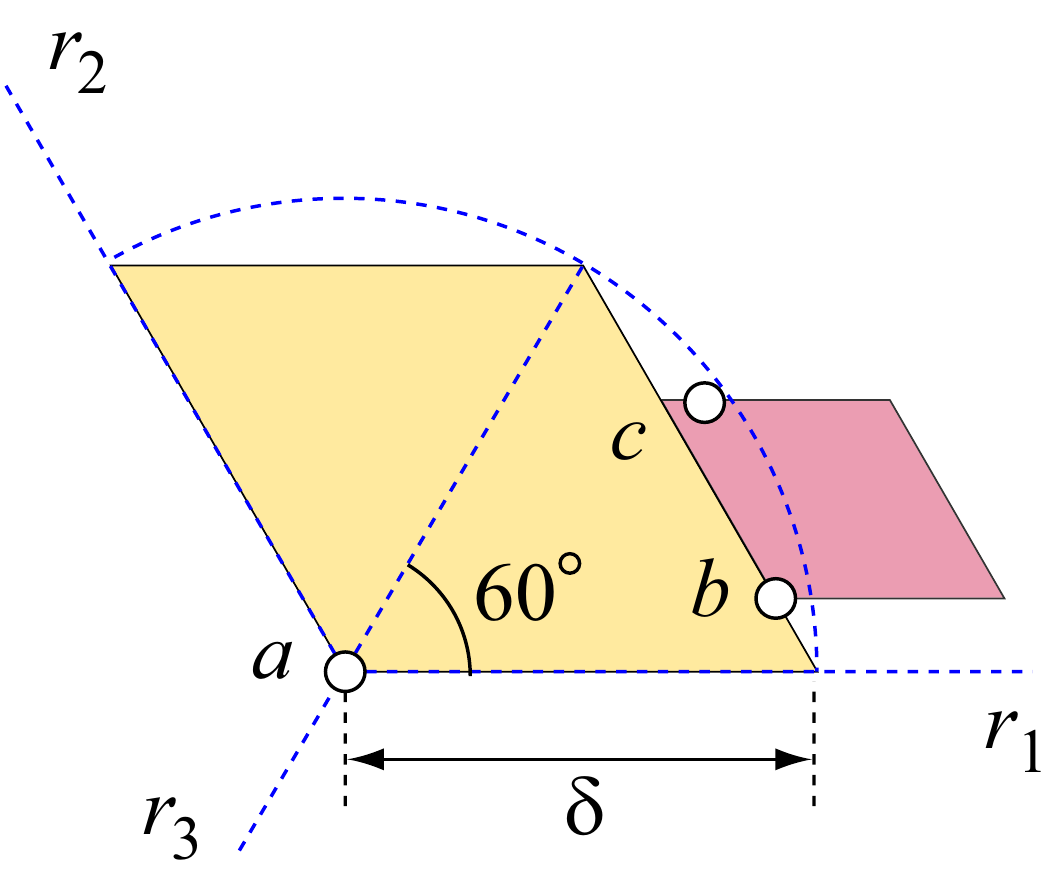}
\caption{Lemma~\ref{lem:tri}, case $c \in C_1(b)$:  $d_R(b, c) \le d_R(a, b)$.}
\label{fig:trilemma1}
\end{figure}

If $c \in C_1(b)$, then $R(a, b)$ and $R(b, c)$ are similar (see Figure~\ref{fig:trilemma1}).
This along with the fact that $c$ lies in a same $60^\circ-$sector as $b$ implies that
$d_R(b,c) < d_R(a,b)$.
(The inequality is strict due to the fact that the lower half of the cone $C_1(a)$ does not
include the upper bounding ray.)
\begin{figure}[hptb]
\centering
\includegraphics[width=\linewidth]{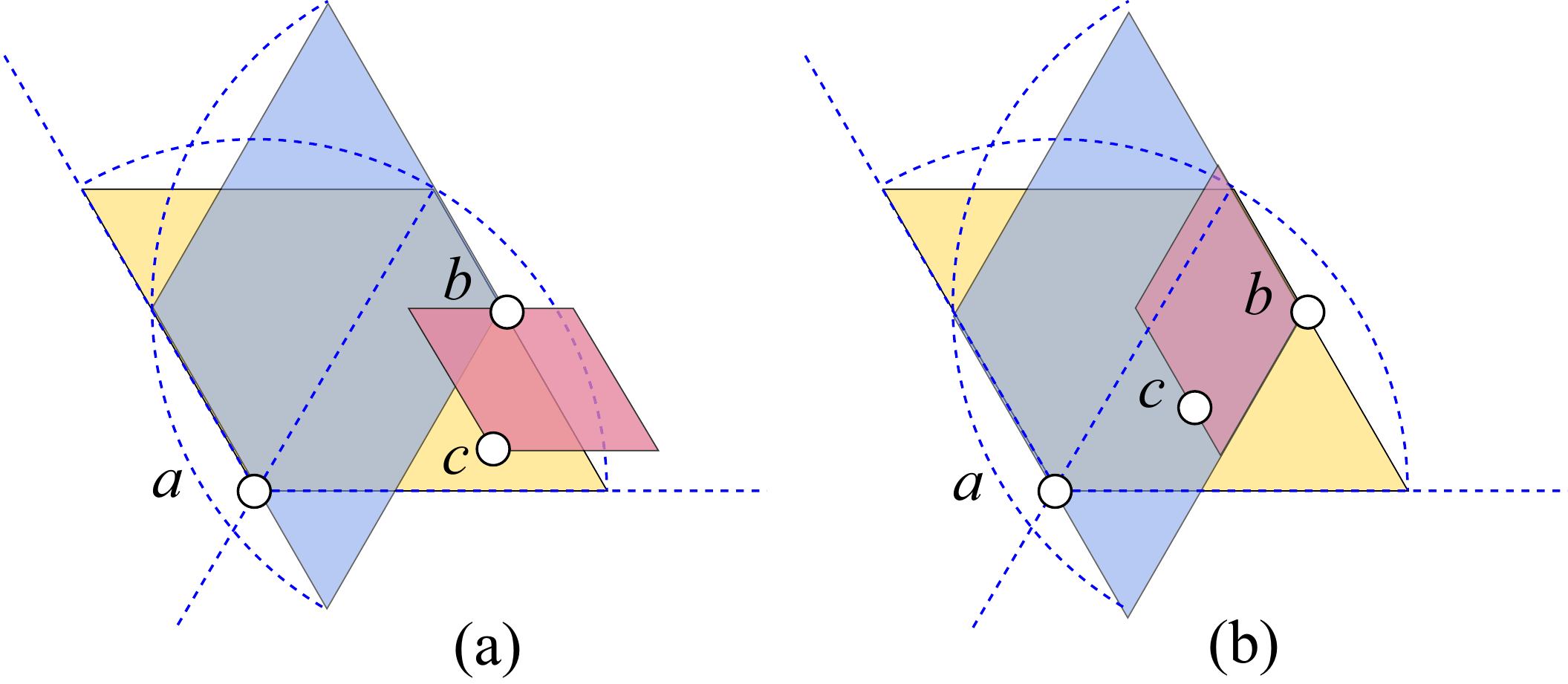}
\caption{Lemma~\ref{lem:tri}: $d_R(b, c) \le d_R(a, b)$ (a) $c \in C_3(b)$ (b) $c \in C_2(b)$.}
\label{fig:trilemma}
\end{figure}
If $c \in C_3(b)$, then $b \in C_1(c)$ (see Figure~\ref{fig:trilemma}a).
This case is similar to the previous one:
$R(a, b)$ and $R(c, b)$ are similar, and $d_R(b,c) = d_R(c, b) < d_R(a,b)$.
Finally, if $c \in C_2(b)$, then
$c \in R(b, a)$ (see Figure~\ref{fig:trilemma}b), and $R(b, c) \subset R(b, a)$.
It follows that $d_R(b,c) < d_R(b,a) = d_R(a, b)$.
\end{proof}

\begin{theorem}
For any point set $S$ such that $G^1(S)$ is connected, $Y_3[G^d]$ is also
connected, for $d = \frac{2}{\sqrt{3}}$.
\label{thm:y3connect}
\end{theorem}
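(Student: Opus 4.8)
The plan is to mimic the structure of the proof of Theorem~\ref{thm:y4connect}, but using the rhombus-based distance $d_R$ in place of $d_\infty$. Assume for contradiction that $G^1$ is connected but $Y_3[G^{2/\sqrt 3}]$ is disconnected, with connected components $J_1$ and $J_2$. Since $G^1$ is connected, some edge of $G^1$ crosses between the components; among all pairs $a\in J_1$, $b\in J_2$, pick one minimizing $d_R(a,b)$. Since $G^1$-adjacent endpoints have Euclidean distance at most $1$, and $d_R$ can only shrink the gap (the $d_R$-minimizing pair cannot exceed the $d_R$-value of a crossing $G^1$-edge, whose $d_R$ is at most its Euclidean length plus a bounded factor), I would show $d_R(a,b)\le 2/\sqrt 3$, hence by Lemma~\ref{lem:dineq}(ii) $|ab|\le 2/\sqrt 3$, so $ab\in G^{2/\sqrt 3}$. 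Then $ab\notin Y_3[G^{2/\sqrt 3}]$ (else it would join the components), so $a$ selects some $\arr{ac}\in Y_3$ in the cone $C_i(a)$ containing $b$, with $c\in C_i(a)$ and $|ac|\le|ab|$; this edge $bc$ also joins $J_1$ and $J_2$, so by minimality $d_R(b,c)\ge d_R(a,b)$, which I must contradict.

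The heart of the argument is the case analysis on the position of $c$ within the cone $C_i(a)$ relative to $b$. When $b$ and $c$ lie in the same half of $C_i(a)$ (both adjacent to $r_i$, or both adjacent to $r_{i+1}$), Lemma~\ref{lem:tri} immediately gives $d_R(b,c)<d_R(a,b)$, a contradiction. The remaining case is when $b$ and $c$ straddle the bisector of $C_i(a)$. Here I would work out, using $|ac|\le|ab|\le d_R(a,b)=\delta$ and the $60^\circ$ cone geometry, that $c$ is confined to a small region, and bound each of the two rhombus-coordinates of $c$ relative to $b$ (i.e.\ the signed distances measured along directions parallel to the two rhombus edges) strictly below $\delta$. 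The worst configuration will be when $b$ sits on the bisector at Euclidean distance exactly $\delta$ from $a$ and $c$ sits at Euclidean distance $|ac|=|ab|$ on the far side of the bisector — this is precisely where the constant $2/\sqrt 3$ enters: one needs $\delta\le 2/\sqrt 3$ so that even in this extreme, $d_R(b,c)<\delta$. I expect this is exactly why the bound is $2/\sqrt 3$ rather than something smaller, and checking that the straddling case never produces $d_R(b,c)\ge\delta$ when $\delta\le 2/\sqrt 3$ will be the main obstacle.

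A subtlety I would be careful about is the half-open nature of the cones and the asymmetry of "which half includes the bisector," since Lemma~\ref{lem:tri} is stated with a specific convention (lower half excludes bisector points, upper half includes them); I would make sure the straddling case is genuinely the only one not covered, possibly after a reflection to normalize which side $b$ lies on. I would also verify the initial step that $d_R(a,b)\le 2/\sqrt 3$ carefully: unlike $d_\infty$, the relation between $d_R$ and Euclidean length is $|ab|\ge \frac{\sqrt 3}{2}d_R(a,b)$ (Lemma~\ref{lem:dineq}(iii)), so a crossing $G^1$-edge $pq$ with $|pq|\le 1$ only guarantees $d_R(p,q)\le 2/\sqrt 3$, which is exactly the threshold we need — there is no slack here, so the choice $d=2/\sqrt 3$ is forced by this step too. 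Once $ab\in G^{2/\sqrt 3}$ is established and the case analysis on $c$ is complete, every case yields $d_R(b,c)<d_R(a,b)$, contradicting the minimal choice of $(a,b)$ and completing the proof.
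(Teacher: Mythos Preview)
Your setup is right and matches the paper through the point where $b$ and $c$ lie in the same half of $C_i(a)$ and Lemma~\ref{lem:tri} finishes. The gap is in the straddling case: it is \emph{not} true that $d_R(b,c)<d_R(a,b)$ there, so your plan to bound the two rhombus-coordinates of $c$ relative to $b$ below $\delta$ cannot succeed. For a concrete counterexample, put $a$ at the origin, $b=(\delta,0)$ on $r_1$ (so $d_R(a,b)=\delta$ and $b$ is in the lower half of $C_1(a)$), and $c=(0,\delta)$ (so $|ac|=|ab|$ and $c$ is in the upper half). Then $c\in C_2(b)$, and decomposing $c-b$ along the directions of $r_2$ and $r_3$ gives coefficients $\delta(1\pm 1/\sqrt 3)$, hence $d_R(b,c)=\delta(1+1/\sqrt 3)>\delta$. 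Note also that the question ``is $d_R(b,c)<d_R(a,b)$?'' is scale-invariant, so the bound $\delta\le 2/\sqrt 3$ cannot help here; the constant $2/\sqrt 3$ enters \emph{only} in the initial step $d_R(a,b)\le d_R(p,q)\le \frac{2}{\sqrt 3}|pq|\le \frac{2}{\sqrt 3}$, which you identified correctly.

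The missing idea is to exploit the symmetry of the situation and look at the edge selected by $b$ as well. Since $a\in C_2(b)$ and $ba\in G^d\setminus Y_3[G^d]$, there is $\arr{be}\in Y_3[G^d]$ with $e\in C_2(b)$ and $|be|\le|ab|$. If $a$ and $e$ share a half of $C_2(b)$, Lemma~\ref{lem:tri} again gives a contradiction (now via the pair $(a,e)$). Otherwise both pairs $(b,c)$ and $(a,e)$ straddle their respective bisectors, and the paper shows that in this doubly-straddling configuration $d_R(c,e)<d_R(a,b)$: $c$ lies in $D(a,|ab|)\setminus R(b,a)$ above the bisector of $C_1(a)$, $e$ lies in $D(b,|ab|)\setminus R(a,b)$ on the appropriate side of the bisector of $C_2(b)$, and these constraints force both rhombus-coordinates of $c-e$ strictly below $\delta$. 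Since $c\in J_1$ and $e\in J_2$, this contradicts the minimality of $(a,b)$. So the contradiction in the hard case comes from the pair $(c,e)$, not from $(b,c)$.
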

\begin{proof}
The proof is by contradiction.
Assume to the contrary that $G^1$ is connected, but
$Y_3[G^d]$ is disconnected. Then $Y_3[G^d]$ has at least two connected
components, say $J_1$ and $J_2$. Since $G^1 \subseteq G^d$ is connected,
there is an edge $pq \in G^1$, with $p \in J_1$ and $q \in J_2$.
To derive a contradiction, consider two points $a, b \in S$, with
$a \in J_1$ and $b \in J_2$, that minimize $d_R(a, b)$. Then
$d_R(a, b) \le d_R(p, q) \le d\cdot|pq|$. This latter
inequality follows from inequality (iii) of Lemma~\ref{lem:dineq}, and
the $d$ value from the lemma statement. This along with
inequality (ii) of Lemma~\ref{lem:dineq} and the fact that
$|pq| \le 1$, implies that $|ab| \le d_R(a, b) \le d$,
therefore $ab \in G^{d}$.

\begin{figure}[htpb]
\centering
\includegraphics[width=0.5\linewidth]{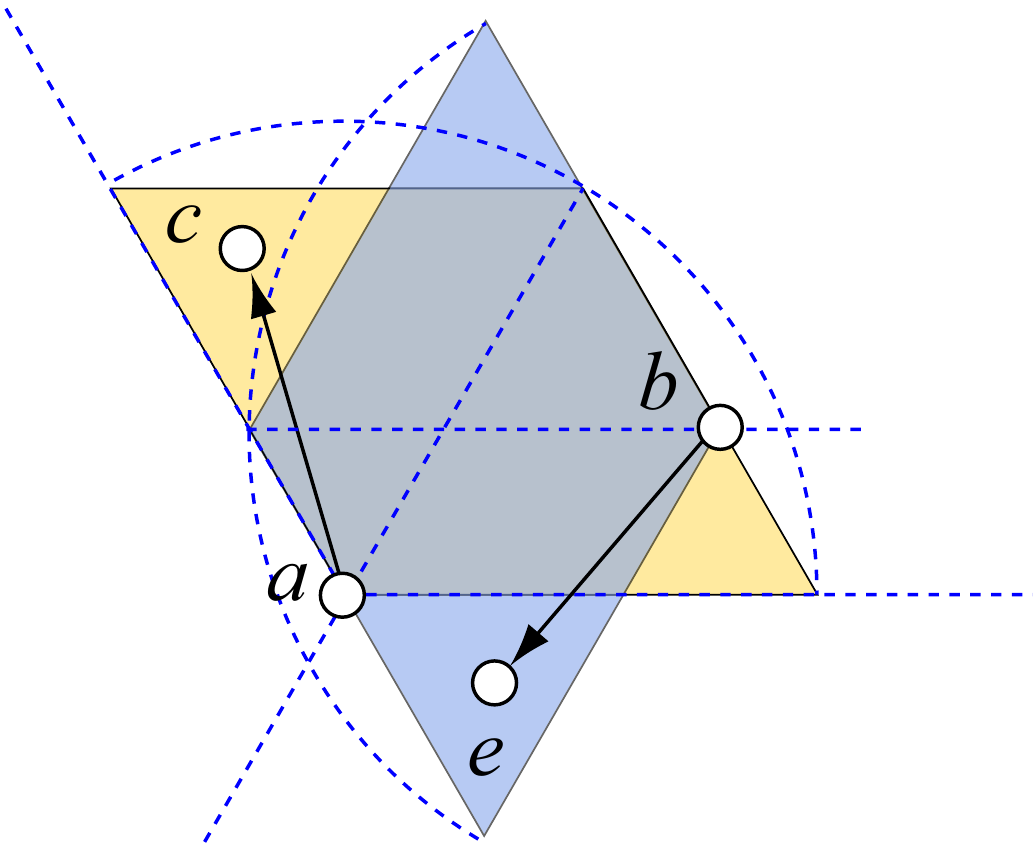}
\caption{Theorem~\ref{thm:y3connect}: case when $a$ and $d$ lie on
a same side of the bisector of $C_2(b)$.}
\label{fig:y3connect1}
\end{figure}

To simplify our analysis, rotate $S$ so that $b \in C_1(a)$.
Because $J_1$ and $J_2$ are not connected in $Y_3[G^{d}]$,
and because $a \in J_1$ and $b \in J_2$, we have that $ab \not\in Y_3[G^{d}]$.
However $ab \in G^{d}$ and $b \in C_1(a)$, therefore
there is $\arr{ac} \in Y_3[G^{d}]$, with $c \in C_1(a)$ and
$|ac| \le |ab|$. If both $b$ and $c$ lie in the same half of $C_1(a)$
(bounded by one ray and the bisector of $C_1(a)$),
then by Lemma~\ref{lem:tri} we have that $d_R(b, c) < d_R(a, b)$.
This along with the fact that $bc$ connects $J_1$ and $J_2$,
contradicts our choice of $ab$. Then $b$ and $c$ must lie on
either side of the bisector of $C_1(a)$, as depicted in
Figure~\ref{fig:y3connect1}.

\begin{figure}[hptb]
\centering
\includegraphics[width=0.7\linewidth]{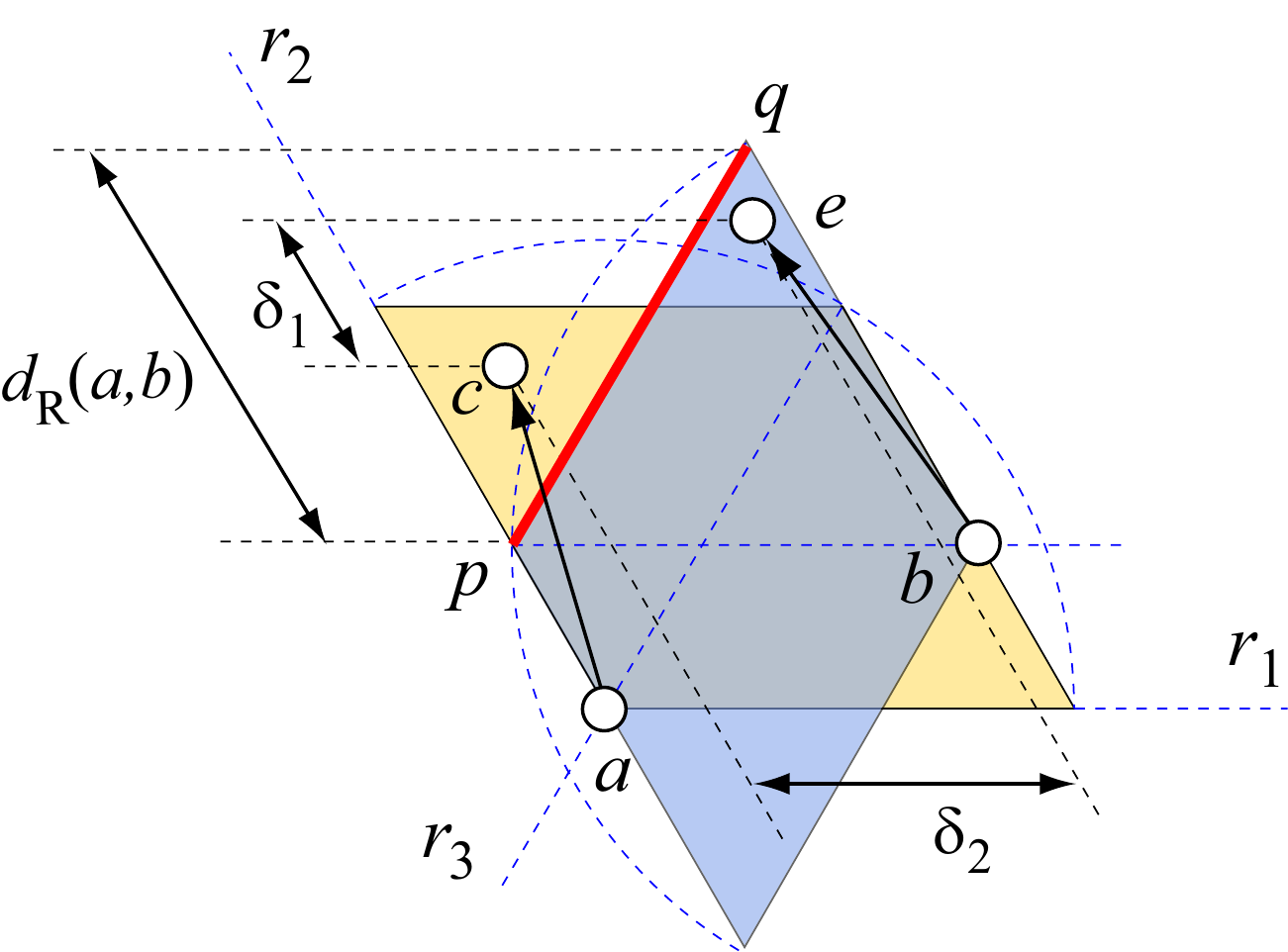}
\caption{Theorem~\ref{thm:y3connect}: case when $a$ and $d$ lie on
opposite sides of the bisector of $C_2(b)$.}
\label{fig:y3connect2}
\end{figure}

Assume without loss of generality that $b$ lies in the lower half
(excluding the bisector) of $C_1(a)$, and
$c$ lies in the upper half (including the bisector) of $C_1(a)$.
Next we focus on $C_2(b)$. Because $a \in C_2(b)$, $ba \in G^d$, and
$ba \not\in Y_3[G^{d}]$, there must exist $\arr{be} \in Y_3[G^{d}]$,
with $e \in C_2(b)$ and $|be| \le |ab|$. As before, if $e$
and $a$ lie in the same half of $C_2(b)$ (bounded by one ray and the
bisector of $C_2(b)$), then by Lemma~\ref{lem:tri} we have that
$d_R(e, a) < d_R(b, a) = d_R(a, b)$.
This along with the fact that $ae$ connects $J_1$ and $J_2$
contradicts our choice of $ab$. It follows that $a$ and $e$ lie on
either side of the bisector of $C_2(b)$, as depicted in
Figure~\ref{fig:y3connect2}.

We now show that $d_R(c, e) < d_R(a, b)$. Let $\delta_1$ be the
length of the projection of $ce$ on the ray $r_2$ in the (horizontal)
direction of $r_1$. Similarly, let $\delta_2$ be the length of the
projection of $ce$ on $r_1$ in the direction of $r_2$.
(See Figure~\ref{fig:y3connect2}.)
Then $d_R(c, e) = \max\{\delta_1, \delta_2\}$. We prove that
$d_R(c, e) < d_R(a, b)$ by showing that each of
$\delta_1$ and $\delta_2$ is smaller than $d_R(a, b)$.

\begin{figure}[hptb]
\centering
\includegraphics[width=\linewidth]{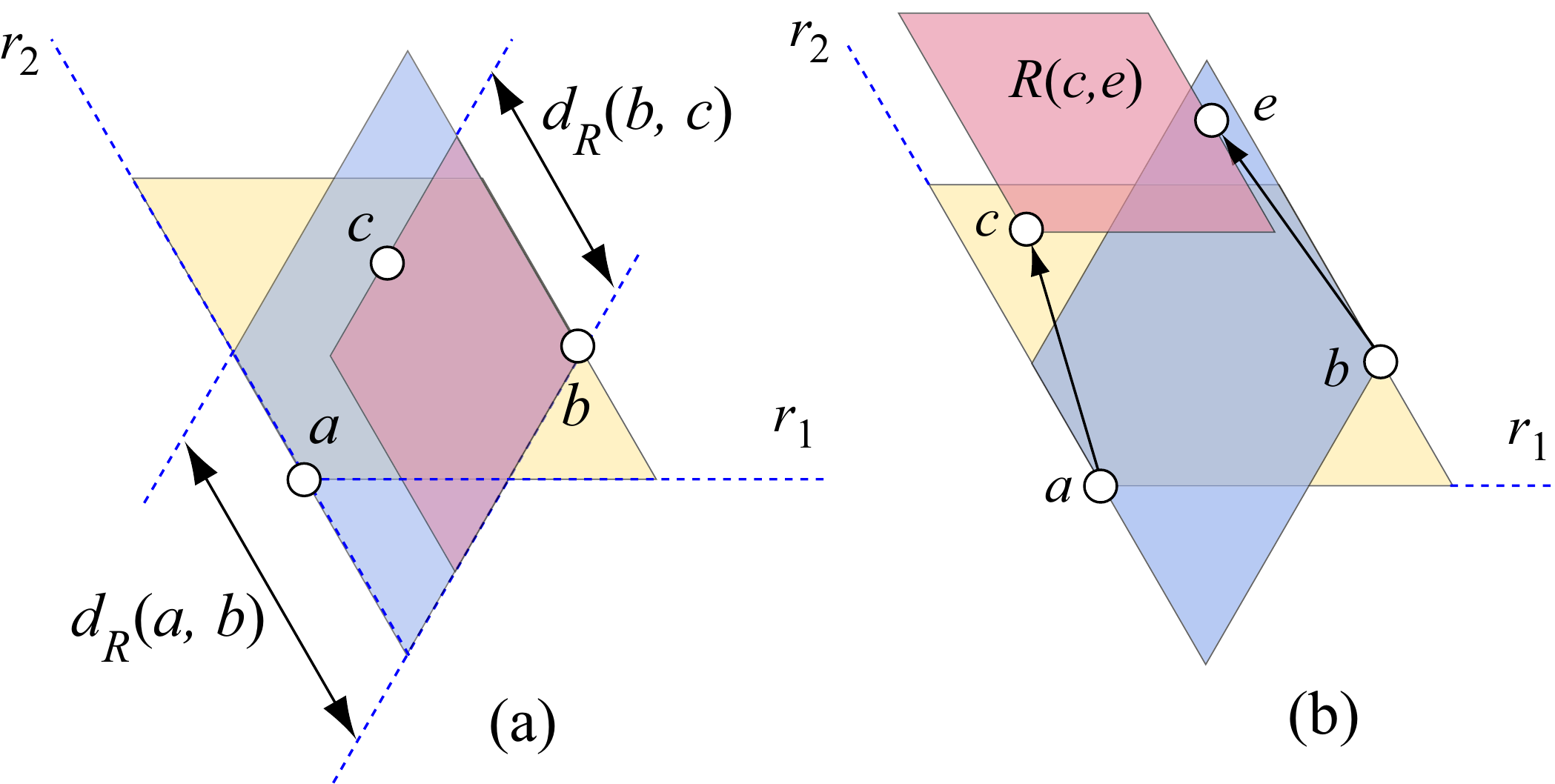}
\caption{Theorem~\ref{thm:y3connect}: (a) $c$ inside $R(b, a)$ (b) $d_R(c,e) < d_R(a, b)$.}
\label{fig:y3connect}
\end{figure}

First note that $c$ must lie outside of $R(b, a)$. Otherwise,
if $c$ were to lie inside $R(b, a)$, then $R(b, c) \subset R(b, a)$
(see Figure~\ref{fig:y3connect}a).
This would immediately imply that $d_R(b, c) < d_R(b, a) = d_R(a,b)$,
which along with the fact that $bc$ connects $J_1$ and $J_2$,
would contradict our choice of $ab$.
So $c$ lies inside $D(a, |ab|)$ (because $|ac| \le |ab|$),
but outside of $R(b, a)$.
Similar arguments show that $e$ lies inside $D(b, |ab|)$, but
outside of $R(a, b)$.
Let $pq$ be the top left side of $R(b, a)$
(marked with a thick line in Figure~\ref{fig:y3connect2}).
By the observations above, $c$ and $e$ cannot lie below $p$ or above $q$.
This implies that the horizontal projection of $ce$ on the ray $r_2$
is strictly shorter than the horizontal projection of $pq$ on $r_2$:
 $\delta_1 < d_R(a,b)$. (The claim on \emph{strictly} shorter comes from the
 fact that $c \in C_1(a, b)$, and $C_1(a, b)$ does not include $r_2$.)
Also, because $c$ and $e$ lie between the two lines through $a$ and $b$
parallel to $r_2$, the projection of $ce$ on $r_1$ in the direction of $r_2$
is strictly shorter than the projection of $ab$ on $r_1$ in the direction
of $r_2$: $\delta_2 < d_R(a, b)$.

We have established that $d_R(c, e) < d_R(a, b)$ (the rhombus $R(c, e)$ is
depicted in Figure~\ref{fig:y3connect}b). This along with the fact that
$ce$ connects $J_1$ and $J_2$, contradicts our choice of $ab$. We conclude
that $G^{d}$ is connected.
\end{proof}

\noindent
Observe that our results leave a tiny gap between the lower bound of
$5-\frac{2}{3}\sqrt{35} \approx 1.056$ from Theorem~\ref{thm:y3lb} and the
upper bound of $\frac{2}{\sqrt{3}} \approx 1.155$ from Theorem~\ref{thm:y3connect}
on the connectivity radius $d$ for $Y_3[G^d]$.
Nevertheless, both bounds beat the tight bound $d = \sqrt{2} \approx 1.414$
for the connectivity radius of $Y_4[G^d]$.

\section{Connectivity of $Y_2$}
The point set $S$ depicted in Figure~\ref{fig:y2disconnect} can be extended to
show that $Y_2[G^d]$ can be disconnected, for arbitrarily large $d$. 
To see this, fix a real value $d \ge 1$, and distribute enough points $a_i$
at unit interval along the leftward ray from $p$, such that the leftmost
point $a_r$ is far enough from $q$ -- in particular, we require that it satisfies
the inequality $|a_rq| > d$.  
Similarly, we require that the rightmost point point $b_r$ satisfies $|b_rp| > d$
(which follows immediately by symmetry). (Note that in this case $d = \Omega(|S|)$.) 
Recall that the leftward ray from $p$ is \emph{almost} horizontal, so $q$ and
all the $b-$points lie above $a_r$.

We now show that $Y_2[G^d]$ is disconnected. First note that $a_1$ is the
point closest to $p$ in $C_1(p)$, and that $C_2(p)$ is empty. Therefore, the
only edge $Y_2[G^d]$ incident to $p$ is $pa_1$. Also note that, for
any $i < r$,  $a_{i+1}$ is the point closest to $a_i$ in $C_1(a_i)$, and
$a_{i-1}$ is the point closest to $a_i$ in $C_2(a_i)$ (here we use $a_0$ to
refer to the point $p$). Finally, $q$ is the point closest to $a_r$ in
$C_1(a_r)$. However, because $|a_rq| > d$, $a_rq$ is not in $G^d$ and
therefore $a_rq$ is not in $Y_2[G^d]$.
The arguments are symmetric for $q$ and the $b-$points in $S$.
This shows that there is no edge in $Y_2[G^d]$ connecting a point in
$\{q, b_i ~|~ 1 \le i \le r\}$ to a point in $\{p, a_i ~|~ 1 \le i \le r\}$.
We conclude that $Y_2[G^d]$ is disconnected for connectivity radius values 
$d = \Omega(|S|)$.

\section{Conclusion}
In this paper we establish matching lower and upper bounds on the connectivity
radius for $Y_4$, and a tight interval on the connectivity radius for $Y_3$. Reducing
the gap between the lower and upper ends of this interval remains open.
These results show that a small
increase in the radius of a directional antenna, (compared to the unit
radius of an omnidirectional antenna,) renders an efficient communication
graph for mobile wireless networks, provided that each node orients its
$k \in \{3, 4\}$ antennas in the direction of the $Y_k$ edges. (Nodes are assumed
to send messages in directional mode, and receive messages in
omnidirectional mode). One key advantage of these graphs is that they
can be quickly constructed locally, providing strong support for node mobility.
We also establish that the connectivity radius for $Y_2$ may be arbitrarily
large, which indicates  that $Y_2$ is not a suitable communication graph for
wireless networks that use narrow (laser-beam) directional antennas.

\def\cprime{$'$}


\end{document}